\documentclass[12pt,reqno]{amsart}

\usepackage[numbers]{natbib}

\usepackage{fix-cm}
\usepackage[utf8]{inputenc}
\usepackage[american]{babel}
\usepackage[babel]{microtype}
\usepackage{amsthm}
\usepackage{amsmath}
\usepackage{amsfonts}
\usepackage{amssymb}
\usepackage{bm}
\usepackage{mathtools}
\usepackage{dsfont}
\usepackage{graphicx}
\usepackage{enumitem}
\usepackage{float}
\usepackage{xcolor}
\usepackage{physics}
\usepackage{subcaption}

\usepackage[colorlinks=true, allcolors=blue, urlcolor=blue]{hyperref}
\usepackage[left=2cm, right=2cm, top=2.5cm]{geometry}

\numberwithin{equation}{section}

\makeatletter
\patchcmd{\@maketitle}
  {\ifx\@empty\@dedicatory}
  {\ifx\@empty\@dedicatory \par\vskip0.25em{\begin{center}\scriptsize\@setaddresses\end{center}}}
  {}{}
\makeatother

\newtheorem{proposition}{Proposition}

\newtheorem{theorem}{Theorem}

\newtheorem{corollary}{Corollary}

\theoremstyle{definition}

\newtheorem{definition}{Definition}
\newtheorem{example}{Example}

\renewcommand{\real}{\mathbb R} %real
\newcommand{\hi}{\mathcal{H}} %Hilbert space H
\newcommand{\hik}{\mathcal{K}} %Hilbert space K
\newcommand{\lh}{\mathcal{L(H)}} %bounded linear operators
\newcommand{\lsh}{\mathcal{L}_s(\mathcal{H})} %selfadjoint linear operators
\newcommand{\lk}{\mathcal{L(K)}} %bounded linear operators on K
\newcommand{\sh}{\mathcal{S(H)}} %states on H
\newcommand{\eh}{\mathcal{E(H)}} %effects
\newcommand{\kb}[2]{|#1\rangle\langle#2|} %ketbra
\renewcommand{\tr}[1]{\mathrm{tr}\left[#1\right]} %trace
\newcommand{\ptr}[2]{\mathrm{tr}_{#1}\left[#2\right]} %partial trace
\newcommand{\id}{\mathds{1}} %identity operator
\renewcommand{\varrho}{\rho}

\newcommand{\meters}{\mathcal{M}}
\newcommand{\A}{\mathsf{A}}%generic observable
\newcommand{\B}{\mathsf{B}}%generic observable
\newcommand{\Eo}{\mathsf{E}}%generic observable
\renewcommand{\L}{\mathcal{L}}%set of linear maps
\newcommand{\X}{\mathsf{X}}%generic observable
\newcommand{\Z}{\mathsf{Z}}

\newcommand{\E}{\mathcal{E}}

\newcommand{\R}{\mathcal{R}}
\newcommand{\I}{\mathcal{I}}

\newcommand{\J}{\mathcal{J}}
\newcommand{\G}{\mathcal{G}}
\newcommand{\ins}{\mathrm{Ins}}

\newcommand{\ch}{\mathrm{Ch}}

\newcommand{\luders}[1]{\mathcal{L}^{#1}}

\title{Adversarial Information Gain in Non-ideal Quantum Measurements}
\author{Andrés Muñoz-Moller \and Leevi Leppäjärvi \and Teiko Heinosaari}
\address{Faculty of Information Technology, University of Jyväskylä, Finland}

\begin{document}

\begin{abstract}

Performing a quantum measurement yields two different results: a classical outcome drawn from a probability distribution, according to Born's rule, and a quantum outcome corresponding to the post-measurement state. Quantum devices that provide both outcomes can be described through quantum instruments. In a realistic scenario, one can expect that the observer's obtained classical and quantum outcomes are non-ideal: this can be due to experimental limitations, but could also be explained by adversarial interference, that is, a second party that disturbs the device through a concealed measurement to obtain information. The second scenario can be interpreted through quantum compatibility, as it implies that both the observer's instrument and the adversary's measurement can be performed simultaneously. In this work, we show how the noise of the observer's device relates to the amount of information that the adversary can obtain. We study scenarios in which the adversary aims to acquire information on the same basis as the observer's measurement, or on a mutually unbiased basis with respect to the observer's basis. In both cases, we derive necessary and sufficient conditions for the compatibility of a single qubit non-ideal quantum instrument and a noisy meter, from which we obtain the maximum amount of information that the adversary can extract in terms of the noise parameters of the observer's instrument. Finally, we provide the device implementation from the adversary's point of view for the same basis scenario.

\end{abstract}

\maketitle
\makeatletter \let\@setaddresses\relax \makeatother

%%%%%%%%%%%%%%%%%%%%%%%%%%%%%%%%%%%%%%%%%%%%%%%%%%%%
\section{Introduction}
%%%%%%%%%%%%%%%%%%%%%%%%%%%%%%%%%%%%%%%%%%%%%%%%%%%%

A quantum measurement serves two functions.
First, it provides information about the input state through the probability distribution of measurement outcomes.
Second, the act of measurement inevitably disturbs the input state, transforming it into a new state.
This transformation may be undesirable, for instance, when the goal is to learn about the system while minimally disturbing it, or it can be intentionally exploited as a means of state preparation.
The concept of a quantum instrument unifies these two aspects into a single mathematical object that describes both the information extraction and state transformation.
In particular, when we have access to a quantum device and can observe its input–output behavior but not its internal workings, the appropriate mathematical description is given by a quantum instrument \cite{Davies1970}.

Suppose that an observer operates a device that is described as an instrument $\mathcal{I}$ and is hence familiar with its input–output behavior.
In a realistic setting, the instrument is not ideal but contains some noise.
In the present investigation, we consider the scenario where this noise originates from the deliberate actions of an adversarial agent.
This adversary may act covertly, and its identity or even its existence might be unknown to the observer. The adversary’s goal is to exploit the measurement process to gain additional information about the quantum state being measured, potentially without detection.
In the worst case scenario, the adversary has designed the instrument in a particular way and keeps a part of it to herself.
The central question is: 
\emph{to what extent can such an adversary gain information about the state of the system being measured?}
We will present a framework where this question can be studied, and fully answer it in various representative cases.

To get an idea of the role of noise in this question, let us first recall a common class of measurements called basis measurements, also known as sharp or projective measurements.
In this idealized case, each measurement outcome $i$ corresponds to a one-dimensional projection operator $P_i$.
The probability of obtaining outcome $i$ for an input state $\varrho$ is given by
$p(i) = \operatorname{tr}[\varrho P_i]$,
and the corresponding conditional output state is the projection $P_i$.
Because the measurement leaves the system in an eigenstate of the measured observable, the same measurement can be repeated immediately with the same outcome.
Hence, it is easy to see that an adversary could have implemented an identical projective measurement without leaving any detectable traces.
Since the statistics and post-measurement states would be exactly the same, no observation on the output system could reveal whether the measurement was performed by the intended observer or by an eavesdropper.
Also, an adversary could have performed a coarse-grained version of the same measurement and this, too, would remain undetectable.
These are, however, the only possibilities if the observer implements the basis measurement.
We therefore conclude that, in the case of basis measurements, an adversary can acquire at most the same information as the observer, but never more.

In realistic scenarios, measurements are noisy to some extent. To account for this, the meter must be described more generally by a positive operator-valued measure (POVM).
A natural and commonly used instrument associated with a meter $\A$ is the Lüders instrument \cite{Luders1950}, defined by the conditional state transformations
$\varrho \mapsto \sqrt{\A_i}\, \varrho \, \sqrt{\A_i}$.
When we look only at the measurement statistics, i.e., meter $\A$ itself, we cannot exclude the possibility that some external agent has interacted with the system.
However, the Lüders instrument corresponding to an unsharp meter is still regarded as ideal, at the level of state transformations.
This is because the Lüders instrument represents the minimally disturbing implementation of the given meter: among all possible instruments that yield the same measurement statistics, the Lüders instrument causes the smallest possible state change compatible with those statistics \cite{HayashiBook06}.
In other words, it models an implementation that extracts the prescribed amount of information but introduces no additional, extraneous disturbance.
Hence, although the measurement outcome distributions may be noisy, the Lüders instrument is optimal relative to that unsharp meter. This is also reflected by the fact that the adversary can obtain only information that can be classically post-processed from the meter $\A$ \cite{Leppjrvi2024}. Thus also in this case an adversary can acquire at most the same information as the observer.

In the present work, we investigate a class of non-ideal quantum instruments that deviate from the ideal device both in their measurement outcome statistics and in their state transformation properties, as shown in Fig.~\ref{fig:noisy_instrument}.
In this model, the meter corresponds to a noisy version of a basis measurement, while the associated state transformations are described by noisy Lüders-type operations.
Such instruments represent a realistic setting where the measurement device is imperfect, either due to inherent physical noise or to deliberate interference.
In the latter case, the adversary can obtain non-trivial information about the input state while keeping her presence undetectable.  
Her actions appear merely as part of the instrument’s non-ideal functioning.
It is therefore vital that the observer can estimate the potential information gain of the adversary from the input-output data of the instrument.
This is exactly the goal of the present investigation.

\begin{figure}[h]
    \centering
    \includegraphics[width=0.75\linewidth]{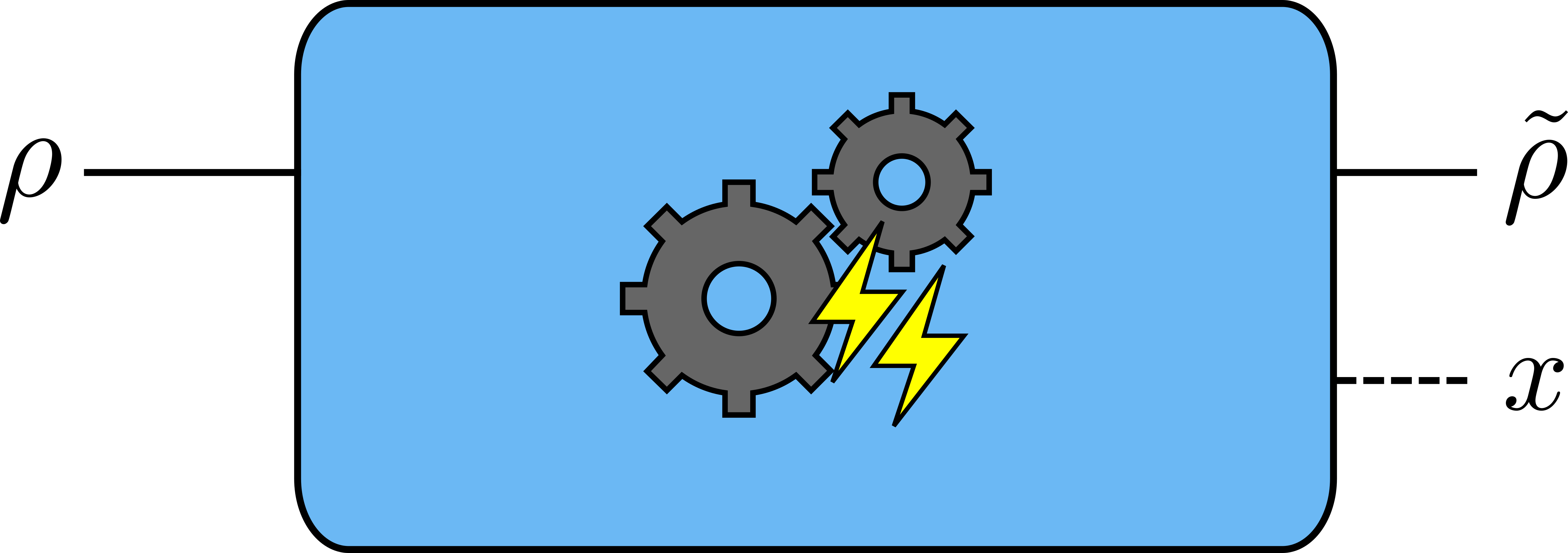}
    \caption{Device that performs a non-ideal quantum measurement on state $\rho$. The outputs correspond to the classical measurement outcome $x$ and the quantum post-measurement state $\tilde{\rho}$, which deviate from an ideal measurement due to noise.}
    \label{fig:noisy_instrument}
\end{figure}

Two special cases are of particular interest:
\begin{itemize}
\item[1.] \emph{Aligned information retrieval}: the adversary seeks information in the same basis as that of the observer’s measurement.
\item[2.] \emph{Complementary information retrieval}: the adversary aims to extract information in a complementary basis, i.e., a basis that is mutually unbiased with respect to the observer’s measurement basis.
\end{itemize}

These two cases have certain similarities, but also crucial differences.
Our two main findings in the above cases are:
\begin{itemize}
\item[1.] When the adversary retrieves information in the same basis as the observer, he can obtain more information than the observer.
The higher the precision of the observer’s meter, the greater the maximal precision attainable by the adversary.
\item[2.] When the adversary retrieves information in a basis that is complementary to that of the observer, the situation is reversed:
The higher the precision of the observer’s meter, the lower the maximal precision attainable by the adversary.
\end{itemize}

In both of these cases we derive the exact formula for the maximal precision of the adversarial information gain.
The technical results behind these findings are related to incompatibility regions of certain type of meters and instruments.
Our contribution is hence also to derive novel incompatibility inequalities for certain classes of meters and instruments.

%%%%%%%%%%%%%%%%%%%%%%%%%%%%%%%%%%%%%%%%%%%%%%%%%%%%
\section{Modelling adversarial information acquisition in nonideal measurements}
%%%%%%%%%%%%%%%%%%%%%%%%%%%%%%%%%%%%%%%%%%%%%%%%%%%%

A quantum measurement has several mathematical descriptions, depending on what functional features of it we are interested in. An instrument gives a description of both measurement outcome statistics and conditional state transformation, without going into details of the inner details of the measurement. 
Hence, one can think of it as a full black-box description of a measurement.
In this sense, an instrument provides a complete black-box description of a measurement process at the operational level.
This viewpoint matches our setting: we assume that the observer has full knowledge of how the measurement device acts on input states and what outputs it produces, but possesses no information about the underlying mechanisms by which these transformations are realized. Let us formalize the different descriptions (see e.g. \cite{Heinosaari2011,Busch2016}).

Let $\hi$ be a finite $d$-dimensional Hilbert space.
We denote the set of linear operators on $\hi$ by $\lh$ and the subset of selfadjoint (Hermitian) operators by $\lsh$. 
Quantum states are then represented by density operators on $\hi$, i.e., positive operators with trace equal to one. The set of density operators on $\hi$ is denoted by $\sh$. 
A quantum operation (in the Schrödinger picture) is a linear map $\Phi: \lh \to \lk$ that is completely positive (CP), i.e., $\Phi \otimes id_\mathcal{V}$ is a positive map for all (finite-dimensional) ancillary Hilbert spaces $\mathcal{V}$, where $id_\mathcal{V}$ denotes the identity map on $\mathcal{L(V)}$, and trace nonincreasing (TNI), i.e., $\tr{\Phi(X)} \leq \tr{X}$ for all $X \in \lh$. 
A quantum operation $\Phi$ is called a channel if it is trace-preserving (TP) so that $\tr{\Phi(X)} = \tr{X}$ for all $X \in \lh$. 
In the Heisenberg picture, this is equivalent to the adjoint map $\Phi^*$ being unital, i.e., $\Phi^*(\id_\hik) = \id_\hi$. A quantum operation $\Phi$ can be represented in terms of Kraus operators $\qty{K_i}_i$ such that $\Phi(X) = \sum_i K_i X K_i^*$ and $\sum_i K_i^* K_i \leq \id$, where the equality holds if $\Phi$ is a channel \cite{Kraus1983-qe}. The set of quantum channels from $\lh$ to $\lk$ is denoted by $\ch(\hi,\hik)$. 
Quantum channels describe the physical transformations of states, whereas quantum operations can be interpreted as probabilistic transformations, where the probability that a transformation given by an operation $\Phi$ that is applied to a state $\varrho$ is successful is given by $\tr{\Phi(\varrho)}$.

A quantum instrument from $\lh$ to $\lk$ with $n$ outcomes is a map $\I: x \mapsto \I_x$ from the set of outcomes $[n]\coloneq\{1,\ldots,n\}$ to the set of quantum operations from $\lh$ to $\lk$ such that $\sum_{x=1}^n \I_x$ is a quantum channel in $\ch(\hi,\hik)$.
The physical interpretation is that the probability of detecting an outcome $x \in [n]$, when a system that was in a state $\varrho \in  \sh$, is $\tr{\I_{x}(\varrho)}$. 
After recording $x$, the (unnormalized) post-measurement state of the system is $\I_x(\varrho)$.
 The set of $n$-outcome instruments from $\lh$ to $\lk$ is denoted by $\ins_n(\hi,\hik)$ and the set of all instruments from $\lh$ to $\lk$ by $\ins(\hi,\hik)$. 

In some situations, we may be only interested in measurement outcome probabilities, or have no access to post-measurement states. 
In this case, the relevant concept is that of a meter.
A meter is described by a positive operator-valued measure (POVM) $\A$ from the set of outcomes $[n]$ to the set of effects $\eh$ (i.e. positive operators below the identity) such that $\sum_{x=1}^n \A(x) = \id$. 
The probability of detecting an outcome $x \in [n]$, when a system that was in a state $\varrho \in  \sh$ is measured with a meter $\A$ is then given by the Born rule as $\tr{\A(x)\varrho}$. The set of $n$-outcome meters on $\hi$ is denoted by $\meters_n(\hi)$, and the set of all meters on $\hi$ by $\meters(\hi)$.
Since an instrument contains the descriptions for both the measurement statistics as well as the state transformation, an instrument $\I$ induces a meter $\A^\I$ defined as $\tr{\A^\I(x) \varrho} = \tr{\I_x(\varrho)}$ for all $\varrho \in \sh$, and a channel $\Phi^\I \in \ch(\hi,\hik)$ as $\Phi^\I = \sum_{x=1}^n \I_x$.

\begin{example}[Noisy meters]\label{ex:noise}
     Let $\A$ be a meter with $n$ outcomes. 
     For each $t \in [0,1]$ we can define the depolarizing channel $\mathcal{D}_t$ as $\mathcal{D}_t(X)  =t X +(1-t) \tr{X}\tfrac{1}{d}\id$, where $d = \dim(\hi)$. In the Heisenberg picture we see how $\mathcal{D}_t^*$ transforms the meter $\A$:
     \begin{align}
         \A_t(x)\coloneq\mathcal{D}_t^*(\A(x)) = t \A(x) +(1-t) \tr{\A(x)} \frac{\id}{d}
     \end{align}
     for all $x \in [n]$. We can now interpret $\A_t$ as a noisy meter that mixes the meter $\A$ with depolarizing noise. Here $1-t$ is the noise parameter which indicates how much depolarizing noise is mixed to the original meter. For $t=1$ we have $\A_1 = \A$ and for $t=0$ we have a completely noisy meter. We note that $\mathcal{D}^*_{t'} \circ \mathcal{D}^*_t = \mathcal{D}^*_{tt'}$ for all $t,t' \in [0,1]$ so that the amount of noise can always be added by applying another depolarizing channel. 
     
     In the case when we have an unbiased meter, i.e., when $\tr{\A(x)} = \tfrac{d}{n}$ for all $x \in [n]$, we can model the noise classically via postprocessing the outcomes of the meter. We say that a meter $\A \in \meters_n(\hi)$ can be post-processed into a meter $\B\in \meters_m(\hi)$ if there exists a stochastic matrix $\nu$ consisting of conditional probabilities $\nu_{y|x} \in [0,1]$ for which $\sum_{y \in [m]} \nu_{y|x} = 1$ for all $x \in [n]$ such that $\B(y) =\sum_{x \in [n]} \nu_{y|x} \A(x)$ for all $y\in [n]$. In this case we denote $\B = \nu \circ \A$. Now for $m=n$ if we have $\tr{\A(x)} = \frac{d}{n}$, we can set $\nu^t_{x|y} = t \delta_{x,y} + (1-t)/n$ for all $x,y \in [n]$, where $\delta_{x,y}$ denotes the Kronecker delta. Now clearly
     \begin{align}
        (\nu^t \circ \A)(x) = \sum_{y=1}^n \nu^t_{x|y} \A(y) = t \A(x) +(1-t) \frac{\id}{n} = \A_t(x)
     \end{align}
     for all $x \in [n]$. Furthermore, also in this case now $\nu^{t'} \circ \nu^t = \nu^{t t'}$ so that for unbiased meters the amount of noise can always be added classically.
\end{example}

\begin{example}[Measure-and-prepare instruments]
    Let $\A$ be a meter with $n$ outcomes. One simple way to induce a transformation that still realizes the measurement of $\A$ is to build a conditional state preparator which prepares new quantum states based on the observed measurement outcomes. Thus, for any set of $n$ states $\xi = \{\xi_x\}_{x=1}^n \subset \sh$ we can define a measure-and-prepare instrument related to $\A$ as 
    \begin{align}
        \E^{\A,\xi}_x(\varrho) = \tr{\A(x) \varrho} \xi_x
    \end{align}
    for all $x \in [n]$ and $\varrho \in \sh$. Such a measurement process can retain only classical information about the input states in the post-measurement description: at best the measurement outcome can be encoded into some set of states that can be perfectly discriminated so that the measurement outcome can be obtained from the post-measurement state. On the other hand, we can also lose all this classical information in the transformation if we for example choose $\xi_x = \xi_y$ for all $x \neq y$ with the most unbiased choice being $\xi_x = \tfrac{1}{d}\id$ for all $x \in [n]$, which can be seen as a measurement process of $\A$ that causes the most disturbance to the input state. The ability of only being able to transmit classical information is also reflected by the fact that a channel $\Phi \in \ch(\hi)$ is entanglement-breaking, i.e., $(\Phi \otimes id_\hi)(\varrho)$ is a separable state for all $\varrho \in \mathcal{S}(\hi \otimes \hi)$, if and only if it is of the measure-and-prepare form $\Phi(\varrho) = \sum_{x \in [n]} \tr{\A(x)\varrho} \xi_x$ for all $\varrho \in \sh$ for some states  $\xi = \{\xi_x\}_{x=1}^n \subset \sh$ and some meter $\A\in \meters_n(\hi)$ \cite{Horodecki2003}. 
\end{example}

\begin{example}[Lüders instrument]
    Another way to measure $\A$ is to implement so-called Lüders instrument $\luders{\A}$ defined as 
    \begin{align}
        \luders{\A}_x(\varrho) = \sqrt{\A(x)} \varrho \sqrt{\A(x)}
    \end{align}
    for all $x \in [n]$ and $\varrho \in \sh$. While measure-and-prepare instruments can be seen as the most disturbing way to measure $\A$ from the state transformation perspective, the Lüders instrument can be seen as the least disturbing way to measure $\A$: any instrument $\I \in \ins_n(\hi)$ for which $\A^\I = \A$ is of the form $\I_x = \Phi^{(x)} \circ \luders{\A}_x$ for all $x \in [n]$ for some set of channels $\{\Phi^{(x)}\}_{x \in [n]} \subset \ch(\hi)$. Thus, any measurement process that measures $\A$ can always be obtained by first realizing the Lüders instrument of $\A$ followed by some conditional channels which induce further state change and possible noise.
\end{example}

A quantum instrument can have noise both in information retrieval and state transformation parts of it.
In the current investigation, we assume that this noise is caused by an adversary, who has some kind of way to access the implementation of the instrument that the observer is using.
Such a situation could arise, for instance, when the observer is using the instrument remotely by sending it quantum states and receiving the measurement data and the post-measurement states. Then, from the security point of view, the observer must assume that any noise that they detect could be a result of some adversary action on the instrument or the communication channels between the device and the observer.
In our approach, we do not have to make assumptions on how the adversary is retrieving information.
The key observation is that whatever information the adversary is getting, her device retrieving the information must be compatible with the observer's device. 
In fact, they are after all parts of a single device, and that is exactly the definition of compatibility (see e.g. \cite{Heinosaari2016,Guhne2023}). In this work we assume that the adversary is trying to get classical information about the input states, i.e., they are looking to implement some meter which measures some property of the system in such a way that their actions cannot be detected from the outputs of the original instrument.
Therefore, in order to inspect the information acquisition of the adversary, we need to characterize meters that are compatible with the observer's instrument \cite{Leppjrvi2024}:

\begin{definition}
    A meter $\A \in \meters_n(\hi)$ is compatible with an instrument $\I \in \ins_m(\hi)$ if there exists a joint instrument $\G \in \ins_{nm}(\hi)$ such that 
    \begin{align}
        \I_x = \sum_{y} \G_{(x,y)}, \quad \A(y) = \sum_{x} \A^{\G}(x,y)
    \end{align}
    for all $x \in [m]$ and $y \in [n]$.
\end{definition}

Thus, the compatibility of a meter and an instrument means that both of them can be implemented together with a single instrument which gives a pair of outcomes, one related to the meter and the other related to the instrument such that if we marginalize the post-measurement state over the first outcome we retrieve the post-measurement state of the original instrument and if we marginalize over the second outcome we can retrieve the measurement statistics of the meter, as shown in Fig.~\ref{fig:obs_adv_1}. We note that in the adversarial scenario, it is the ignorance about the adversary's measurement outcome (and even about the existence of the adversary's measurement) that corresponds to such marginalization. When the observer does not have access to the second measurement outcome, the device acts not as the joint instrument but as the marginalization of the joint instrument over the adversary's measurement outcome. Therefore, in no circumstances can the observer meaningfully distinguish between the measuring device being inherently noisy or having an extra classical outcome feeding information to an adversary.

\begin{figure}[h]
    \centering
    \includegraphics[width=0.75\linewidth]{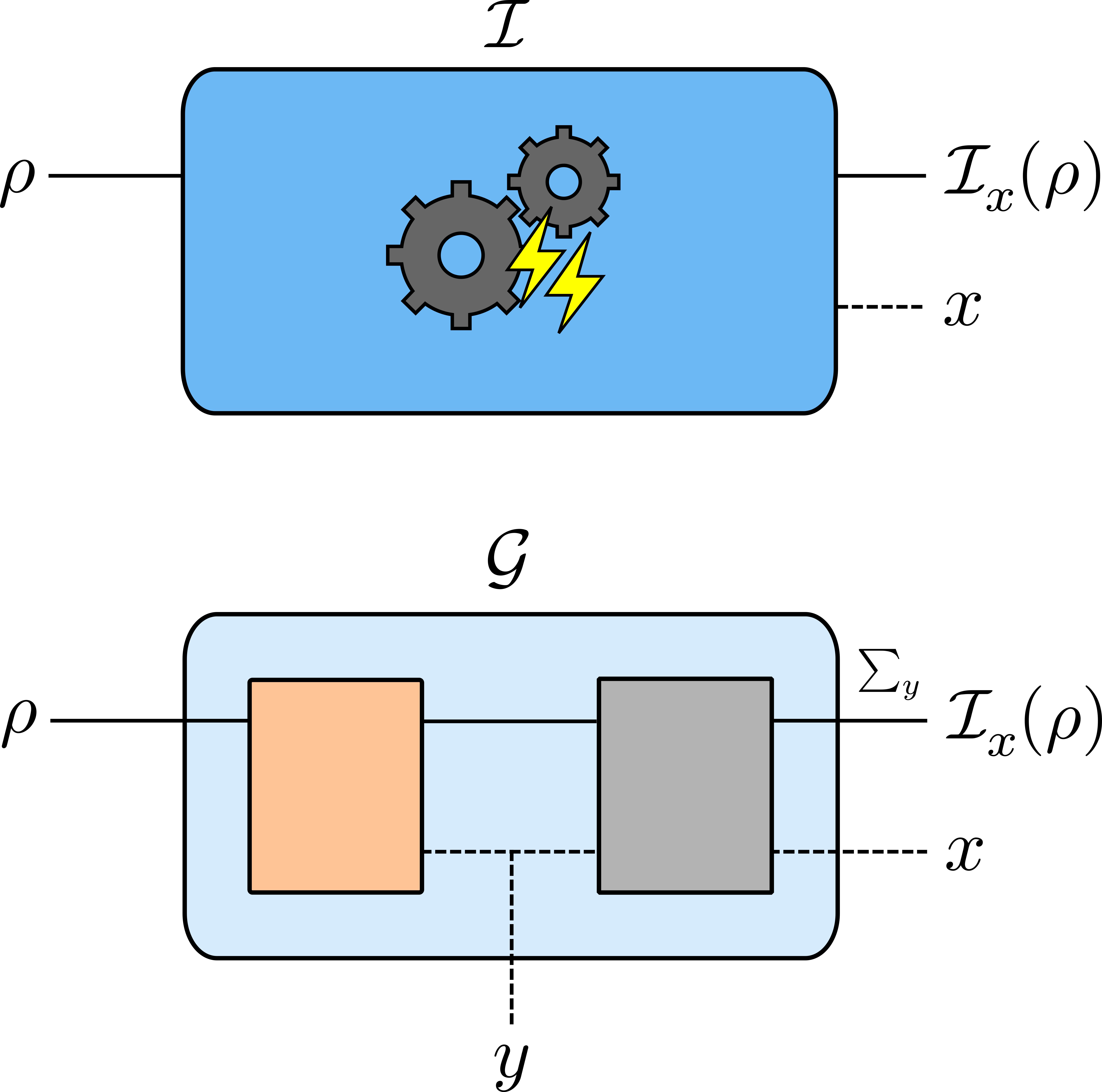}
    \caption{Observer and adversary viewpoints. In the top figure, the quantum instrument $\I$ corresponds to a black-box: the observer has full knowledge of the outputs of the measurement process for any given state, but possesses no information about the underlying mechanisms by which these transformations are performed. In the bottom figure, the full joint instrument $\G$ is shown: the adversary obtains the measurement outcome $y$ from meter $\A$. By taking the marginals over the adversary's outcome, the observer's expected outcomes are obtained.}
    \label{fig:obs_adv_1}
\end{figure}

%%%%%%%%%%%%%%%%%%%%%%%%%%%%%%%%%%%%%%%%%%%%%%%%%%%%
\section{Maximal adversarial information gain in qubit measurements}
%%%%%%%%%%%%%%%%%%%%%%%%%%%%%%%%%%%%%%%%%%%%%%%%%%%%

In this section, we get back to the introduced framework of information retrieval.
In the current setting $\mathcal{I}$ is in the possession of the observer, while $\A$ is used by the adversary. 
In order to say anything more specific about the compatibility relation, we need to specify some classes of instruments and meters.
In the following, we focus on qubit measurements.
We define a noisy unbiased dichotomic qubit meter $\A^{t, \hat{m}}$ as
\begin{align}
\A^{t, \hat{m}}(\pm) \coloneq \frac{1}{2} \left( \id \pm t \sigma_{\hat{m}} \right), 
\end{align}
where $0\leq t \leq 1$, $\hat{m}\in\real^3$ is a unit vector and $\sigma_{\hat{m}} = m_x\sigma_x + m_y \sigma_y + m_z \sigma_z$. 
This is a sharp (i.e. projective) meter if and only if $t=1$, and a trivial coin-tossing meter if and only if $t=0$.
Generally, for $t<1$, these are noisy meters in the sense of Example \ref{ex:noise}.

A class of instruments that describes different implementations of this meter is
\begin{align}\label{eq:mixed_inst}
    \I_{\pm}^{\lambda, t, \hat{m}}(\varrho) \coloneq \  \lambda \sqrt{\A^{t, \hat{m}}(\pm)} \varrho \sqrt{\A^{t, \hat{m}}(\pm)} \ + (1-\lambda) \tr{\A^{t, \hat{m}}(\pm) \varrho} \frac{\id}{2}.
\end{align}
There is now an additional noise parameter $0 \leq \lambda \leq 1$. We can justify this noise model as follows: As was seen in Example \ref{ex:noise}, in a non-adversarial scenario the noise in the meter can be understood to originate from the depolarizing channel $\mathcal{D}_t$  since $\A^{t, \hat{m}}(\pm) = \mathcal{D}_t^*\left(\A^{1, \hat{m}}(\pm)\right)$, or since $\A^{t, \hat{m}}$ is unbiased, the noise can be considered equally well to originate from the classical postprocessing $\nu^t$ as $\A^{t, \hat{m}}(\pm) = (\nu^t \circ \A^{1, \hat{m}})(\pm)$. 
Then, the noise for the full instrument can be understood to originate from the depolarizing channel and/or classical postprocessing as
\begin{align}
    \I_{\pm}^{\lambda, t, \hat{m}} = \mathcal{D}_\lambda\circ \luders{\mathcal{D}_t^*( \A^{1, \hat{m}})}_{\pm} = \mathcal{D}_\lambda\circ \luders{\nu^t \circ \A^{1, \hat{m}}}_{\pm}.
\end{align}
In the case $\lambda=1$, the instrument is the Lüders instrument of $\A^{t, \hat{m}}$.
In the other extreme case $\lambda=0$, the instrument is a measure-and-prepare instrument.
The parameters $t$ and $\lambda$ are independent, which relates to the physical fact that each meter has multiple implementations\footnote{Assuming that depolarizing channels accurately represent the underlying noise of both the meter and the instrument, the parameters $t$ and $\lambda$ could be experimentally obtained by computing the Pauli transfer matrix of each operation through methods like quantum instrument linear gate set tomography (See \cite{Rudinger2022, Hashim2025, Hashim2026})}.

As discussed in the introduction, in the two special cases the adversarial information gain is simple to solve: 
\begin{itemize}
\item When $\lambda=1$, then $\A^{s,\hat{n}}$ is compatible with $\I^{1, t, \hat{m}} = \luders{\A^{t,\hat{m}}}$ if and only if $\A^{s,\hat{n}}$ can be classically post-processed from $\A^{t,\hat{m}}$, as shown in \cite[Cor. 7]{Leppjrvi2024}. This is possible if and only if $\hat{n} = \hat{m}$ and $s \leq t$. 
\item When $\lambda=0$ we have that $\A^{s,\hat{n}}$ is compatible with the measure-and-prepare instrument $\I^{0, t, \hat{m}}$ if and only if $\A^{s,\hat{n}}$ and $\A^{t,\hat{m}}$ are compatible meters \cite[Prop. 9]{Leppjrvi2024}. This is possible if and only if \cite[Thm. 4.5]{Busch1986-rx}
\begin{equation}\label{eq:qubit_comp}
    ||s \hat{n}+ t \hat{m}||+||s \hat{n}- t \hat{m}||\leq 2.
\end{equation}
In the special case when the directions are complementary, i.e. $\hat{n} \perp \hat{m}$, this reduces to the condition $s^2+t^2\leq 1$.
\end{itemize}

In the cases $0 < \lambda < 1$, the adversarial information gain is more involved.
It drastically depends on the type of information that the adversary tries to acquire.
In the following, we characterize the maximal information gain in the cases of aligned directions (i.e. $\hat{n} = \hat{m}$) and complementary directions (i.e. $\hat{n} \perp \hat{m}$).
The following theorems characterize the compatibility of the instrument $\mathcal{I}^{\lambda, t, \hat{m}}$ with the previously described two classes of meters $\A^{s, \hat{n}}$.

\begin{theorem}\label{thm:aligned}\emph{(Aligned directions)}
An instrument $\I^{\lambda, t, \hat{m}}$ and a meter $\A^{s, \hat{m}}$ are compatible if and only if
\begin{equation}
s \leq \frac{1}{2}\left[1- \lambda + \sqrt{(1-\lambda)(1+3\lambda) + 4t^2 \lambda^2} \right].
\end{equation}
\end{theorem}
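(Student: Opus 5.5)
The approach is to reduce compatibility to a small semidefinite program in Choi-matrix form, simplify it by symmetry, and solve it in closed form. Take $\hat m=\hat z$ without loss of generality. A joint instrument $\G$ with the required marginals is the same as a decomposition $\I_x=\G_{(x,+)}+\G_{(x,-)}$ of each operation $\I_x=\I^{\lambda,t,\hat z}_x$ into two CP maps, such that $\A^{s,\hat z}(+)=\sum_x \G_{(x,+)}^*(\id)$.

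In Choi language, write $J_x$ for the Choi matrix of $\I_x$. We must find $0\le X_x\le J_x$ such that the input marginals satisfy
\begin{align}
\sum_x \mathrm{tr}_{\mathrm{out}}[X_x]=\tfrac12(\id+s\sigma_z)^T .
\end{align}
Since $\A^{s',\hat z}$ is a classical post-processing (via $\nu$) of $\A^{s,\hat z}$ whenever $s'\le s$, the set of compatible $s$ is an interval $[0,s_{\max}]$. It therefore suffices to compute $s_{\max}$.

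\emph{Step 1 (symmetry reduction).} The instrument is covariant under rotations about the $z$-axis. It is also covariant under a $\pi$-rotation about the $x$-axis combined with relabelling $x\mapsto -x$, and the target meter transforms consistently under these symmetries. Averaging any feasible $(X_+,X_-)$ over the group keeps it feasible and keeps the same $s$. So I will assume that $X_x$ is $U(1)$-invariant and that $X_-$ is the mirror image of $X_+$.

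Next write out $J_+$ in the basis $|ij\rangle$ (input $i$, output $j$), with $a=(1+t)/2$ and $b=(1-t)/2$. Then $J_+$ is block diagonal. On $\mathrm{span}\{|00\rangle,|11\rangle\}$ it is the $2\times2$ block
\begin{align}
M=\begin{pmatrix}\tfrac{1+\lambda}{2}a & \lambda\sqrt{ab}\\ \lambda\sqrt{ab} & \tfrac{1+\lambda}{2}b\end{pmatrix},
\end{align}
and on $|01\rangle,|10\rangle$ it is diagonal with entries $\tfrac{1-\lambda}{2}a$ and $\tfrac{1-\lambda}{2}b$. A $U(1)$-invariant $X_+\le J_+$ has the same block form: a $2\times 2$ block $0\le Y\le M$ and two scalars $0\le c_1\le \tfrac{1-\lambda}{2}a$ and $0\le c_2\le \tfrac{1-\lambda}{2}b$. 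The quantity $s$ is the $\sigma_z$-expectation of the input marginal of $X_++X_-$. Using the mirror symmetry, this becomes
\begin{align}
s=\tr{\sigma_z Y}+c_1-c_2 \;+\;(\text{mirror contribution of } X_-).
\end{align}
The diagonal part is optimised trivially by $c_1$ maximal and $c_2=0$. Doing the same for $X_-$, with the roles of $a,b$ swapped, makes the diagonal parts contribute $\tfrac{1-\lambda}{2}(a+b)$ in total, up to the sign bookkeeping that needs to be checked. Fixing this bookkeeping carefully is the routine part.

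\emph{Step 2 (the key optimisation).} It remains to compute $\max\{\tr{\sigma_zY}:0\le Y\le M\}$. Substituting $Y=M^{1/2}PM^{1/2}$ with $0\le P\le\id$, the maximum equals the sum of the positive eigenvalues of $M^{1/2}\sigma_zM^{1/2}$. This operator has the same spectrum as $\sigma_zM$, whose trace is $M_{00}-M_{11}$ and whose determinant is $-\det M$. It therefore has exactly one positive eigenvalue,
\begin{align}
\tfrac12\left[(M_{00}-M_{11})+\sqrt{(M_{00}-M_{11})^2+4\det M}\right].
\end{align}
Here $M_{00}-M_{11}=\tfrac{1+\lambda}{2}t$ and $\det M=ab\left[\tfrac{(1+\lambda)^2}{4}-\lambda^2\right]=\tfrac{(1-t^2)(1-\lambda)(1+3\lambda)}{16}$. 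The mirror outcome gives the identical contribution.

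Adding the two block contributions to the diagonal contributions from Step 1, I expect the expression to simplify to
\begin{align}
\tfrac12\left[1-\lambda+\sqrt{(1-\lambda)(1+3\lambda)+4t^2\lambda^2}\right],
\end{align}
using $(1+\lambda)^2t^2+(1-t^2)(1-\lambda)(1+3\lambda)=(1-\lambda)(1+3\lambda)+4\lambda^2t^2$. The optimiser $Y$ constructed this way gives an explicit joint instrument, which proves sufficiency. The symmetry argument together with the eigenvalue bound proves necessity.

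\emph{Main obstacle.} I expect the hardest part to be the bookkeeping in the symmetry reduction. One must check that the twirled feasible point really has the claimed block form, and that the diagonal parts and the sign conventions (input marginal versus transpose, and which entries feed $\A(+)$) combine exactly into the stated closed form. The eigenvalue computation itself is short.
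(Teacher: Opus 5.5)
Your strategy is sound and differs from the paper's. The paper goes through Proposition~\ref{pr:1}. It computes a conjugate instrument of $\I^{\lambda,t,\hat z}$ on a four-dimensional ancilla and casts compatibility as a primal SDP over effects $B^\pm$ together with its dual. It then exhibits a rank-two projection $B$ and the dual point $\mu^\pm=B\Pi_z^\pm B$, whose values coincide. You instead work directly with Choi matrices of the joint instrument and twirl over the symmetry group. You then solve the reduced problem exactly via the fact that $\max\{\tr{KY}:0\le Y\le M\}$ is the sum of the positive eigenvalues of $M^{1/2}KM^{1/2}$. This gives necessity and sufficiency in one computation, with no dilation and no duality argument. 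It also produces an explicit joint instrument and shows where $\sqrt{\Lambda+4\lambda^2t^2}$ comes from: the discriminant of $\sigma_zM$. The price is that it relies on $U(1)$ covariance about the common axis, which is special to the aligned case. The paper's SDP framework treats Theorems~\ref{thm:aligned} and~\ref{thm:comp} uniformly and provides checkable certificates.

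Two points in your bookkeeping are wrong as written and need fixing.

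\emph{The mirror relation.} The $\pi$-rotation about the $x$-axis also swaps the adversary's outcomes. So, with $X_x$ the Choi matrix of $\G_{(x,+)}$, the symmetrized point satisfies $X_-=(\sigma_x\otimes\sigma_x)(J_+-X_+)(\sigma_x\otimes\sigma_x)$, i.e.\ $\G_{(-,+)}$ mirrors $\G_{(+,-)}$. Taking $X_-$ to be the mirror of $X_+$ would force $s=0$.

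This corrected relation also takes care of the normalization $\tr{\A(+)}=1$, which your phrase ``$s$ is the $\sigma_z$-expectation of the input marginal'' silently drops. It gives $\sum_x\ptr{\mathrm{out}}{X_x}=\tfrac12(\id+s\sigma_z)$ with $s=2\tr{(\sigma_z\otimes\id)X_+}-t$. It also makes $0\le X_-\le J_-$ equivalent to $0\le X_+\le J_+$. If you keep maximizing over $X_+$ and $X_-$ separately, as your text does, you must check, or arrange, that the chosen maximizers give a marginal of trace one.

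\emph{The block contributions.} The contributions of the two outcomes are not identical. They are $\tfrac14\bigl[\pm(1+\lambda)t+\sqrt{\Lambda+4\lambda^2t^2}\bigr]$, with $\Lambda=(1-\lambda)(1+3\lambda)$, and the linear terms cancel. Taken literally, ``identical'' would give $2t$ at $\lambda=1$.

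With these fixes,
\[
\max\tr{(\sigma_z\otimes\id)X_+}=\tfrac{(1-\lambda)(1+t)}{4}+\tfrac14\bigl[(1+\lambda)t+\sqrt{\Lambda+4\lambda^2t^2}\bigr],
\]
so $s_{\max}=\tfrac12\bigl[1-\lambda+\sqrt{\Lambda+4\lambda^2t^2}\bigr]$, as claimed.
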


\begin{theorem}\label{thm:comp}\emph{(Complementary directions)}
An instrument $\I^{\lambda, t, \hat{m}}$ and a meter $\A^{s, \hat{n}}$, where $\hat{n} \perp \hat{m}$, are compatible if and only if
\begin{equation}
s \leq  \frac{1}{2} \left[ 1- \lambda + \sqrt{(1-\lambda)(1+3\lambda)}\right]\sqrt{1-t^2}.
\end{equation}
\end{theorem}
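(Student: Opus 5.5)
The plan is to reduce compatibility with $\I^{\lambda,t,\hat m}$ to compatibility of a meter with the depolarizing channel $\mathcal{D}_\lambda$, and to pick up the factor $\sqrt{1-t^2}$ from the Lüders square roots. First I would parametrize all meters compatible with $\I^{\lambda,t,\hat m}$. Write $\I_\pm = \mathcal{D}_\lambda\circ\luders{\A^{t,\hat m}}_\pm$ and fix a Stinespring isometry $V:\hi\to\hi\otimes\hik_E$ of $\mathcal{D}_\lambda$, with complementary channel $\Phi^c$. By Radon–Nikodym for CP maps, every joint instrument $\G$ with $\sum_y \G_{(x,y)}=\I_x$ is $\G_{(x,y)}(\varrho)=\ptr{E}{(\id\otimes E_x(y))V\sqrt{\A(x)}\varrho\sqrt{\A(x)}V^*}$ for some POVMs $E_\pm$ on the environment. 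The meters compatible with $\I^{\lambda,t,\hat m}$ are therefore exactly
\begin{equation*}
\B(y)=\sum_{x=\pm}\sqrt{\A^{t,\hat m}(x)}\,F_x(y)\,\sqrt{\A^{t,\hat m}(x)},\qquad F_x(y)=(\Phi^{c})^*(E_x(y)).
\end{equation*}
Here each $F_x$ ranges over the meters compatible with the channel $\mathcal{D}_\lambda$, i.e. over the range of $(\Phi^c)^*$ on POVMs.

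Second, I would record the single-channel ingredient. A dichotomic meter $\frac12(\id\pm c\sigma_{\hat k})$ is obtainable as $(\Phi^c)^*(E)$ if and only if $c\le c_\lambda:=\frac12[1-\lambda+\sqrt{(1-\lambda)(1+3\lambda)}]$. This is the $t=0$ case of the theorem. It can be computed directly from the four-Kraus form of $\mathcal{D}_\lambda$ by maximizing over effects on the $4$-dimensional environment, a small eigenvalue problem, and it is consistent with the $\lambda=0$ and $\lambda=1$ endpoints. Then I would compute the conjugation by the square roots. Write $\sqrt{\A^{t,\hat m}(\pm)}=a\id\pm b\sigma_{\hat m}$ with $a^2+b^2=\frac12$ and $2ab=\frac t2$. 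For $\hat n\perp\hat m$ one finds
\begin{equation*}
\sum_\pm(a\id\pm b\sigma_{\hat m})\sigma_{\hat n}(a\id\pm b\sigma_{\hat m})=2(a^2-b^2)\sigma_{\hat n}=\sqrt{1-t^2}\,\sigma_{\hat n},\qquad \sum_\pm\sqrt{\A(\pm)}\,\id\,\sqrt{\A(\pm)}=\id.
\end{equation*}
Choosing $F_+=F_-=\frac12(\id\pm c_\lambda\sigma_{\hat n})$ thus gives $s=c_\lambda\sqrt{1-t^2}$, which proves sufficiency.

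Third, and this is where I expect the main obstacle, I would prove necessity: no choice with $F_+\neq F_-$, or with Pauli components outside $\hat n$, does better. I would first symmetrize. Conjugation by $\sigma_{\hat m}$ fixes $\A^{t,\hat m}(\pm)$, commutes with $\mathcal{D}_\lambda$, and sends $\sigma_{\hat n}\mapsto-\sigma_{\hat n}$. Combining it with the relabeling $y\mapsto -y$ of the adversary's outcome, and averaging with the original, lets me assume without loss that $F_x(+)=\frac12[(1+\alpha_x)\id+\vec f_x\cdot\vec\sigma]$ with the required covariance. Expanding the sandwich for a general $F_x(+)$, the $\sigma_{\hat n}$-coefficient of $\B(+)$ is
\begin{equation*}
(a^2-b^2)\big(f_+^{\hat n}+f_-^{\hat n}\big)+\text{(cross terms in } \hat m\times\hat n\text{)}.
\end{equation*}
The cross terms come with opposite signs for $x=\pm$ and must combine with the constraint that the $\id$ and $\sigma_{\hat m}$ parts of $\B(+)$ vanish appropriately. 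The key inequality to establish is $\tfrac12|f_+^{\hat n}+f_-^{\hat n}+\text{cross}|\le c_\lambda$, using that each $F_x$ individually lies in the channel-compatible set, which is convex and rotation covariant since $\mathcal{D}_\lambda$ is unitarily covariant. If the cross terms resist a direct estimate, I would fall back on a dual SDP certificate: a Hermitian witness that certifies $s\le c_\lambda\sqrt{1-t^2}$ and is built from the optimal dual of the single-channel problem conjugated by the square roots.
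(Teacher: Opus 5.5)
Your first two steps are sound. The Radon--Nikodym parametrization $\B(y)=\sum_x\sqrt{\A^{t,\hat m}(x)}\,(\Phi^c)^*(E_x(y))\sqrt{\A^{t,\hat m}(x)}$ is correct; when $V\sqrt{\A^{t,\hat m}(x)}$ is not a minimal dilation, e.g.\ for $t=1$, extend the effects from the minimal environment subspace. The choice $F_+=F_-=\A^{c_\lambda,\hat n}$ gives sufficiency.

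The gap is necessity, which you do not prove. You postulate a ``key inequality'' containing unspecified cross terms and fall back on an unspecified dual witness. Your single-channel lemma is also stated only for unbiased meters $\frac12(\id\pm c\,\sigma_{\hat k})$, whereas the $F_x$ of an arbitrary joint instrument can be biased and point in any direction.

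The obstacle you anticipate does not exist. Since $\{\sigma_{\hat m},\sigma_{\hat k}\}=0$ for $\hat k\perp\hat m$, each individual sandwich already satisfies $(a\id\pm b\sigma_{\hat m})\sigma_{\hat k}(a\id\pm b\sigma_{\hat m})=(a^2-b^2)\sigma_{\hat k}=\frac{\sqrt{1-t^2}}{2}\sigma_{\hat k}$. Meanwhile $\id$ and $\sigma_{\hat m}$ are mapped into $\mathrm{span}\{\id,\sigma_{\hat m}\}$. So nothing leaks into $\sigma_{\hat n}$, whether or not $F_+=F_-$. Hence, for every compatible meter,
\[
s=\mathrm{tr}\bigl[\B(+)\sigma_{\hat n}\bigr]=\sum_{x=\pm}\mathrm{tr}\bigl[E_x(+)\,\Phi^c\bigl(\sqrt{\A^{t,\hat m}(x)}\,\sigma_{\hat n}\sqrt{\A^{t,\hat m}(x)}\bigr)\bigr]=\frac{\sqrt{1-t^2}}{2}\sum_{x=\pm}\mathrm{tr}\bigl[E_x(+)\,\Phi^c(\sigma_{\hat n})\bigr].
\]
Because $0\le E_x(+)\le\id$, each trace is at most the sum of the positive eigenvalues of $\Phi^c(\sigma_{\hat n})$. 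In the Kraus basis $\sqrt{p_0}\,\id,\sqrt{p}\,\sigma_i$ of $\mathcal{D}_\lambda$, with $p_0=\frac{1+3\lambda}{4}$ and $p=\frac{1-\lambda}{4}$, this $4\times 4$ matrix splits into two $2\times 2$ blocks. Their eigenvalues are $\pm 2\sqrt{p_0p}=\pm\frac12\sqrt{(1-\lambda)(1+3\lambda)}$ and $\pm 2p=\pm\frac{1-\lambda}{2}$. So the bound is exactly $c_\lambda$, and $s\le c_\lambda\sqrt{1-t^2}$ follows. This needs no symmetrization and none of the constraints on the $\id$, $\sigma_{\hat m}$ or $\sigma_{\hat m\times\hat n}$ components. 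Your $\sigma_{\hat m}$-symmetrization plus rotation covariance would also work, since it reduces each $F_x$ to an unbiased meter to which your lemma applies.

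Completed this way, your proof is a genuinely different and shorter route than the paper's. The paper dilates the whole instrument and computes the conjugate-instrument images $\Pi_i^\pm$ explicitly. It then poses an SDP and certifies optimality with projections $B^\pm$ and dual variables $\mu^\pm=B^\pm\Pi_x^\pm B^\pm$, checking positivity via Sylvester's criterion; in effect the $\mu^\pm$ are the positive parts of the $\Pi_x^\pm$. Your factorization $\I_x=\mathcal{D}_\lambda\circ\luders{\A^{t,\hat m}}_x$ makes the analogous operators explicit, namely $\frac{\sqrt{1-t^2}}{2}\Phi^c(\sigma_{\hat n})$. The problem then collapses to the $t=0$ depolarizing case, the product form $c_\lambda\sqrt{1-t^2}$ becomes transparent, and you get an explicit joint instrument. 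The paper's framework (Proposition~\ref{pr:1} plus an SDP) is instead general-purpose. It does not rely on the special structure $\mathcal{D}_\lambda\circ$ Lüders or on Pauli anticommutation, so it can be run, at least numerically, for other instruments, directions and dimensions.
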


We postpone the proofs of these theorems to the next section. Before that, we discuss the consequences.
The regions defined in the theorems are depicted in Fig. \ref{fig:region-comparison}.

\begin{figure}[h]
    \centering
    \begin{subfigure}{0.49\linewidth}
        \centering
        \includegraphics[width=\linewidth]{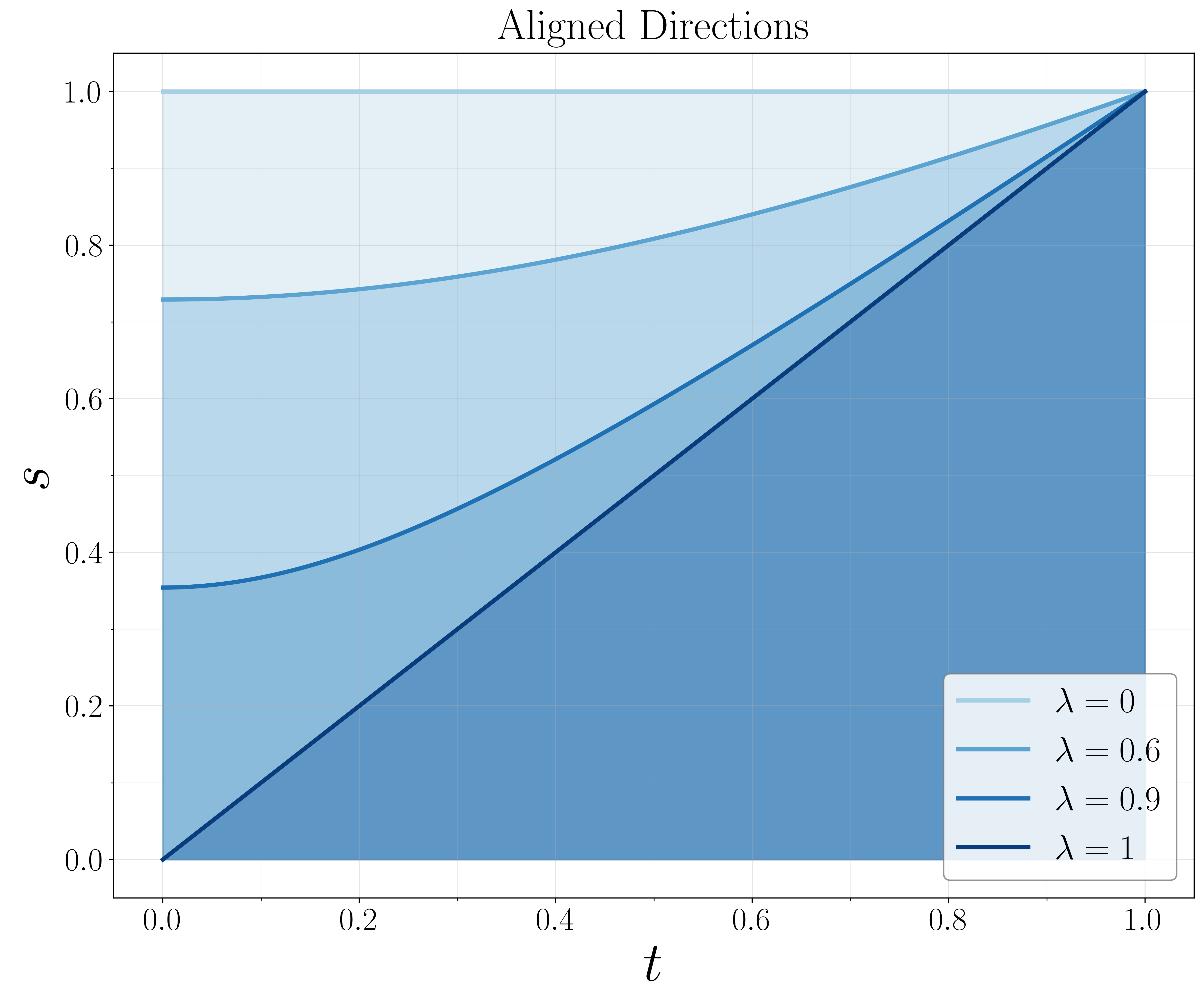}
    \end{subfigure}
    \hfill
    \begin{subfigure}{0.49\linewidth}
        \centering
        \includegraphics[width=\linewidth]{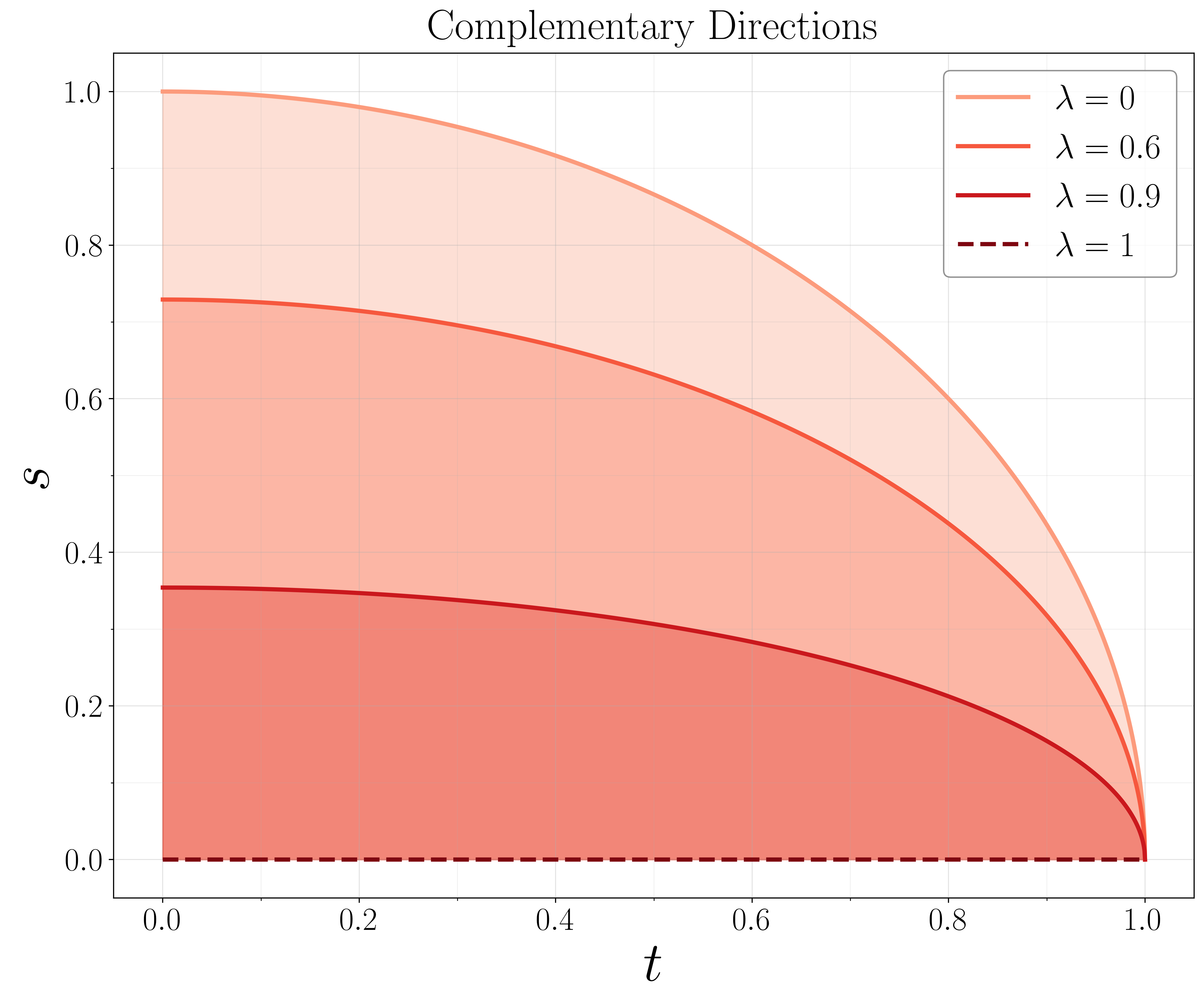}
    \end{subfigure}
    \caption{Regions of compatibility for instrument $\mathcal{I}^{\lambda,t,\hat{m}}$ and meter $\mathcal{A}^{s,\hat{n}}$ for different noise values $\lambda$, in the aligned (left) and complementary (right) directions.}
    \label{fig:region-comparison}
\end{figure}

Firstly, notice that we recover the adversarial information gain from the previously mentioned special cases: 

\begin{itemize}
    \item For $\lambda=1$, the adversary's information gain satisfies $s\leq t$ in the aligned directions case (dark blue area), while in the complementary directions the adversary cannot retrieve information (black dotted line).
    \item For $\lambda=0$, the adversary's information gain satisfies $s \leq 1$ in the aligned directions (light blue area), while in the complementary directions, it satisfies $t^2+s^2\leq1$ (light red area); both inequalities can be recovered from Eq.~\eqref{eq:qubit_comp}.
\end{itemize} 

Notice that whenever $t=0$, the direction becomes irrelevant, which can be observed as both inequalities coincide: 

\begin{equation}
    s \leq \frac{1}{2}\left[1- \lambda + \sqrt{(1-\lambda)(1+3\lambda)} \right].
\end{equation}

This inequality corresponds to the necessary and sufficient condition of compatibility between the noisy qubit meter $\A^{s,\hat{n}}$ and the depolarizing channel $\mathcal{D}_\lambda$ \cite{Heinosaari2018,Torii2025}. Indeed, in our case when $t=0$ the instrument $\I^{\lambda, 0, \hat{m}}$ consists of two quantum operations that are proportional to the depolarizing channel $\mathcal{D}_\lambda$.

\begin{corollary}\label{cor:max}
The maximum amount of information that the adversary can obtain, quantified by the sharpness $s_{\max}$ of the meter $\A^{s_{\max}, \hat{n}}$, for the aligned and complementary directions is, respectively

\begin{align}
s^{a}_{\max} &= \frac{1}{2}\left[1- \lambda + \sqrt{(1-\lambda)(1+3\lambda) + 4t^2 \lambda^2} \right],\\
s^{c}_{\max} &= \frac{1}{2} \left[ 1- \lambda + \sqrt{(1-\lambda)(1+3\lambda)}\right]\sqrt{1-t^2}.
\end{align}
\end{corollary}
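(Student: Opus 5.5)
The corollary follows from Theorems~\ref{thm:aligned} and~\ref{thm:comp} by a monotonicity argument, so the plan is to derive it directly from them. First observe that the family $\A^{s,\hat{n}}$ is ordered by post-processing: for $s' \le s$ we have $\A^{s',\hat{n}} = \nu^{s'/s}\circ \A^{s,\hat{n}}$ (Example~\ref{ex:noise}, using that these meters are unbiased). Hence if $\A^{s,\hat{n}}$ is compatible with $\I^{\lambda,t,\hat{m}}$ via a joint instrument $\G$, then post-processing the adversary's outcome of $\G$ with $\nu^{s'/s}$ gives a joint instrument for $\A^{s',\hat{n}}$. The marginal on the observer's outcome is unchanged because each column of $\nu$ sums to one. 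So the set of admissible sharpness values is a down-set of $[0,1]$. This is consistent with the ``$s\le$'' form of both theorems, and it justifies reading the adversary's maximal information gain as the supremum of that set.

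Next, identify the maximum with the right-hand side in each case. By Theorem~\ref{thm:aligned}, in the aligned case the admissible set is $\{s\in[0,1] : s \le f_a(\lambda,t)\}$, where $f_a$ is the bracketed expression. By Theorem~\ref{thm:comp}, in the complementary case it is $\{s \in[0,1]: s\le f_c(\lambda,t)\}$. The set is closed, so the supremum is attained. It remains to check that $f_a, f_c \in [0,1]$, so that the bound is actually reached by a legitimate meter rather than truncated at $1$.

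For $f_a\le 1$: this is equivalent to $\sqrt{(1-\lambda)(1+3\lambda)+4t^2\lambda^2}\le 1+\lambda$. Squaring, this becomes $1+2\lambda-3\lambda^2+4t^2\lambda^2 \le 1+2\lambda+\lambda^2$, that is, $4t^2\lambda^2\le 4\lambda^2$, which holds since $t\le1$. Nonnegativity is clear. For $f_c$: the same computation with $t=0$ gives $\tfrac12[1-\lambda+\sqrt{(1-\lambda)(1+3\lambda)}]\le 1$, and multiplying by $\sqrt{1-t^2}\le 1$ keeps $f_c\in[0,1]$.

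The only potential subtlety is the bounds check. It reduces to the elementary inequality above, so there is no real obstacle: the corollary is essentially a restatement of the two theorems, with $s_{\max}^a=f_a$ and $s_{\max}^c=f_c$.
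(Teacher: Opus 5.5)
Your proposal is correct and takes essentially the paper's route: the paper gives no separate proof and simply reads the corollary off Theorems~\ref{thm:aligned} and~\ref{thm:comp}, whose SDP proofs obtain $s_{\max}$ as the common primal--dual optimum. Your post-processing argument via $\nu^{s'/s}$ and your check that both bounds lie in $[0,1]$ are not needed once the theorems are granted, but they make explicit what the paper leaves implicit when it says compatibility ``clearly'' holds for all $s\le s_{\max}$.
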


The maximal information gain in the two situations of Cor. \ref{cor:max} is depicted in Fig. \ref{fig:smax-comparison}. For a fixed value of $\lambda$, as the sharpness $t$ of the observer's meter increases, we notice that the adversary's information gain can either increase (aligned case), or decrease (complementary case). On the other hand, for a fixed value of $t$, as the parameter $\lambda$ decreases, the adversary's information gain increases in both cases. The big leap from $\lambda=1$ to $\lambda=0.9$ shows that a small amount of noise already gives a significant difference in information gain.

\begin{figure}[h]
    \centering
    \begin{subfigure}{0.49\linewidth}
        \centering
        \includegraphics[width=\linewidth]{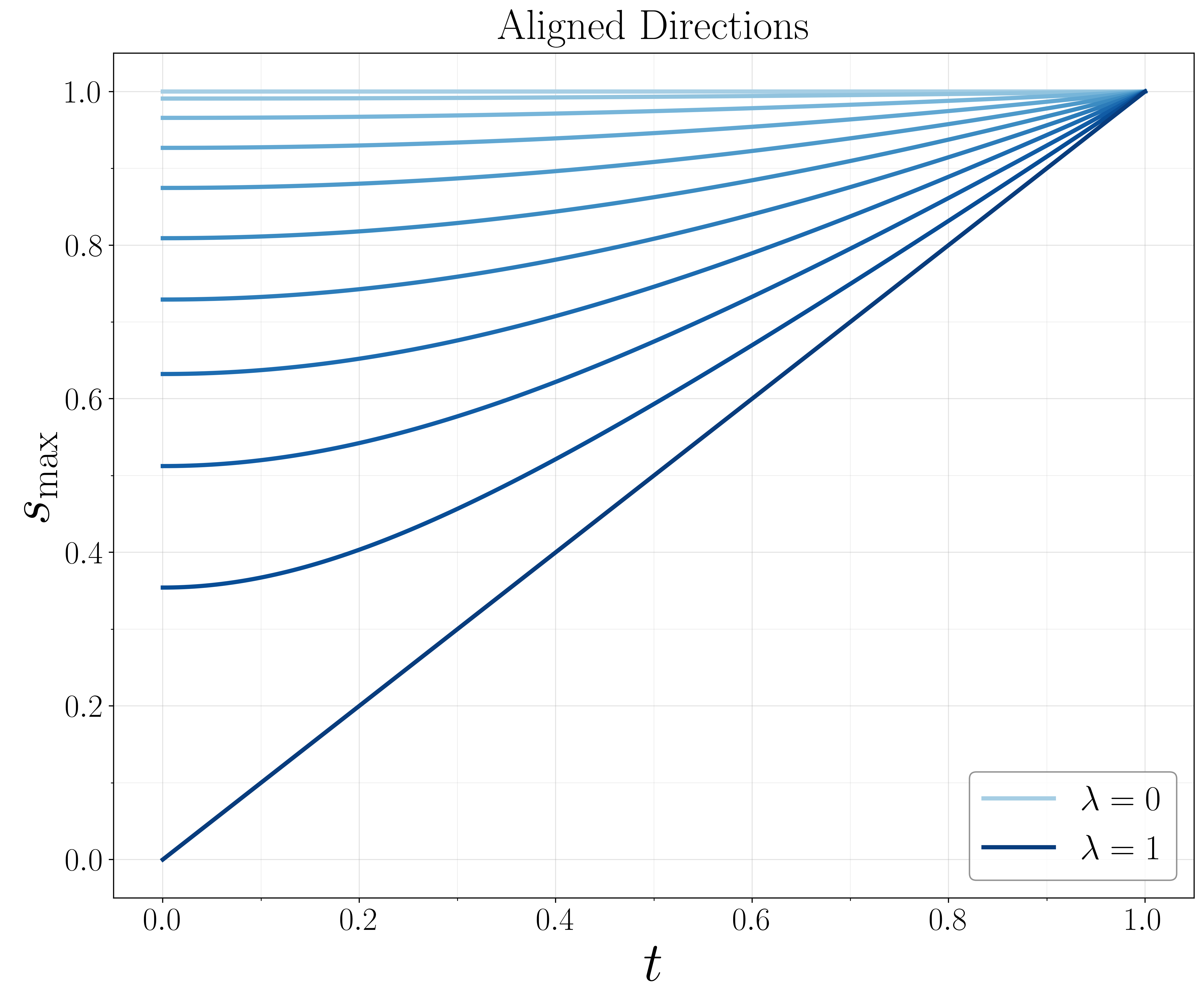}
    \end{subfigure}
    \hfill
    \begin{subfigure}{0.49\linewidth}
        \centering
        \includegraphics[width=\linewidth]{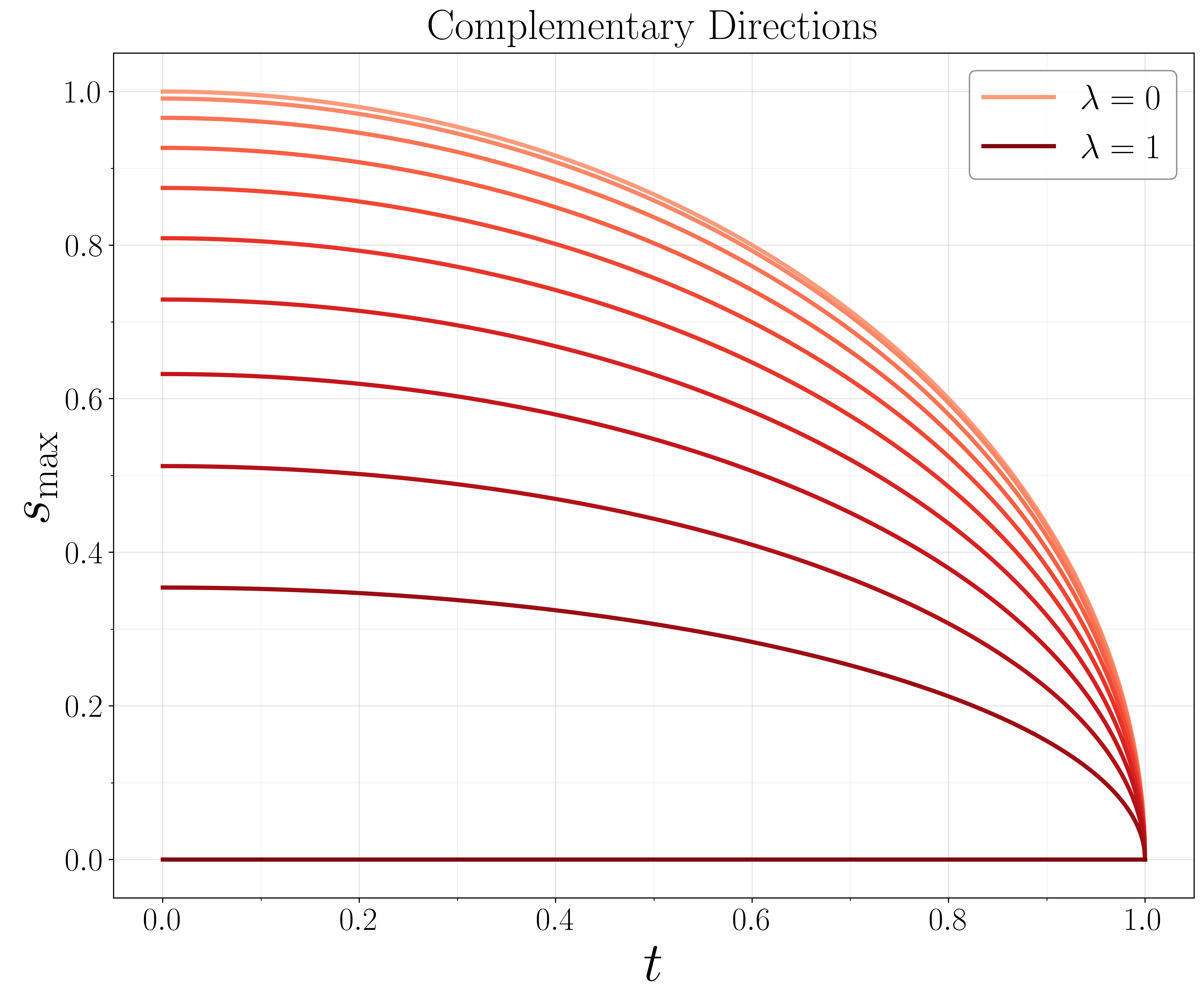}
    \end{subfigure}
    \caption{Adversary's maximal information gain $s_{\max}$ as a function of the sharpness $t$ for eleven equispaced values of $\lambda$ from 0 to 1, in the aligned (left) and complementary (bottom) directions.}
    \label{fig:smax-comparison}
\end{figure}

%%%%%%%%%%%%%%%%%%%%%%%%%%%%%%%%%%%%%%%%%%%%%%%%%%%%
\section{Compatibility of a meter and an instrument}
%%%%%%%%%%%%%%%%%%%%%%%%%%%%%%%%%%%%%%%%%%%%%%%%%%%%

As we have seen in the previous section, the crucial part in understanding the limits of adversarial information gain is to analyze the meter-instrument compatibility.
In order to give a useful characterization for meter-instrument compatibility, we need to look more carefully into how instruments can be implemented.

It can be shown \cite{Ozawa1984,Chiribella2009} that for every instrument $\I \in \ins_n(\hi)$ there exists a dilation $(\hi_A, W,\Eo)$, that is, an ancillary Hilbert space $\hi_A$, an isometry $W : \hi \rightarrow \hi_A \otimes \hi$ and a meter $\Eo \in \meters_n(\hi_A)$ such that:
\begin{align}\label{eq:dil}
    \I_x(\varrho) = \ptr{\hi_A}{W \varrho W^{*} (\Eo(x) \otimes \id_\hi)}
\end{align}
for all $\varrho \in \sh$ and $x \in [n]$. We note that the action of the isometry can also be expressed in terms of a unitary interaction between the system and some fixed state of the ancillary system. In this context (especially in the case when the meter $\Eo$ is a projective measurement), a dilation is often referred to as a measurement model \cite{Ozawa1984,Heinosaari2011} since it gives a description of how the instrument can be realized in terms of letting the system interact with some ancillary system and then making a measurement on the ancillary state.

In Eq.~\eqref{eq:dil}, the instrument is obtained by performing a demolishing measurement on the ancillary system, which includes tracing out the ancilla. If instead we trace out the output system of the original instrument, we get the conjugate instrument $\overline{\I} \in \ins_n(\hi, \hi_A)$ relative to this dilation \cite{Leppjrvi2024}. The conjugate instrument $\overline{\I}$ is then given by
\begin{align}
\overline{\I}_x(\varrho) = \ptr{\hi}{(\sqrt{\Eo(x)} \otimes \id_\hi) W \varrho W^{*} (\sqrt{\Eo(x)} \otimes \id_\hi)}
\end{align}
for all $\varrho \in \sh$ and $x \in [n]$. Notice that the conjugate instrument can be expressed as the conjugate of the induced channel $\overline{\Phi^{\I}}(\varrho) = \ptr{\hi}{W \varrho W^{*}}$ followed by the Lüders instrument of meter $\Eo$ on the ancillary system:

\begin{align}
\overline{\I}_x(\varrho) = \L_x^{\Eo} \circ \overline{\Phi^{\I}}(\varrho) = \L_x^{\Eo}(\ptr{\hi}{W \varrho W^{*}}).
\end{align}

Both the instrument dilation and the conjugate instrument are depicted in Fig.~\ref{fig:dilation}.

\begin{figure}[h]
    \centering
    \includegraphics[width=0.7\linewidth]{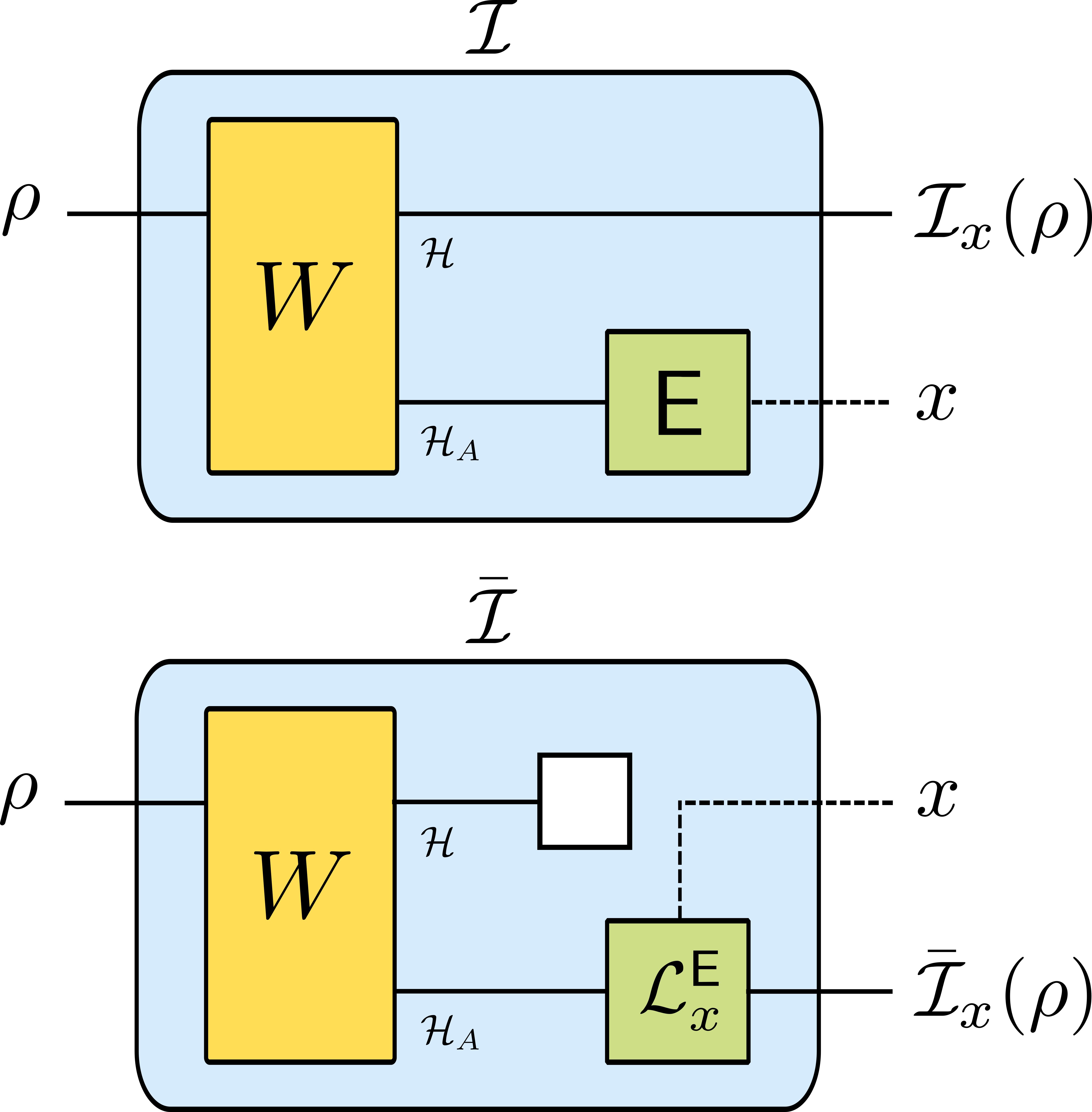}
    \caption{Top: Instrument $\I$ implemented through the dilation $(\hi_A, W,\Eo)$. The top and bottom wires correspond to the systems belonging to the output  Hilbert space $\hi$ and the ancillary Hilbert space $\hi_A$, respectively. Bottom: Conjugate instrument $\overline{\I}$ relative to the same dilation. The partial trace (white box), is taken over the output system, and the corresponding classical and quantum outcomes are obtained through the Lüders instrument $\L^{\Eo}$ acting on the ancillary system.}
    \label{fig:dilation}
\end{figure}

We are now ready to state the following general characterization of meter-instrument compatibility based on \cite{Leppjrvi2024}, which we will use later.

\begin{proposition}\label{pr:1}
A meter $\A\in \meters_m(\hi)$ is compatible with an instrument $\I \in \ins_n(\hi)$ if and only if for any conjugate instrument $\overline{\I}$ related to any dilation $(\hi_A, W,\Eo)$ of $\I$ there exists a collection of meters $\left\{\B^y \right\}_{y \in [n]} \subset \meters_n(\hi_A)$ such that
\begin{align}\label{eq:met-inst-comp}
    \A(x) = \sum_{y \in [n]} \overline{\I}^*_y(\B^y(x))
\end{align}
for all $x \in [m]$.
\end{proposition}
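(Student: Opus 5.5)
The plan is to prove the two implications separately, with an explicit construction in one direction and a Radon–Nikodym-type factorization in the other. Since the "if" direction only needs the condition for \emph{one} dilation, while the "only if" direction will produce the meters for \emph{every} dilation, this also shows that the choice of dilation in the statement is irrelevant. Throughout, fix a dilation $(\hi_A,W,\Eo)$ of $\I$ and put $V_y \coloneq (\sqrt{\Eo(y)}\otimes\id_\hi)W:\hi\to\hi_A\otimes\hi$. With this notation
\begin{align*}
\I_y(\varrho)=\ptr{\hi_A}{V_y\varrho V_y^*}, \qquad \overline{\I}_y(\varrho)=\ptr{\hi}{V_y\varrho V_y^*},
\end{align*}
and consequently $\overline{\I}^*_y(B)=V_y^*(B\otimes\id_\hi)V_y$. (I read the meters $\B^y$ as having the outcome set of $\A$, i.e.\ $\B^y\in\meters_m(\hi_A)$.)

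\emph{Sufficiency.} Given meters $\B^y$ on $\hi_A$, define the candidate joint instrument
\begin{align*}
\G_{(y,x)}(\varrho)=\ptr{\hi_A}{(\sqrt{\B^y(x)}\otimes\id_\hi)V_y\varrho V_y^*(\sqrt{\B^y(x)}\otimes\id_\hi)}.
\end{align*}
Each $\G_{(y,x)}$ is completely positive, being the composition of a conjugation with a partial trace. Three checks then finish this direction:
\begin{itemize}
\item Summing over $x$ and using $\sum_x\B^y(x)=\id$ together with cyclicity of the partial trace in $\hi_A$ gives $\sum_x\G_{(y,x)}=\I_y$.
\item Since $\sum_y \I_y$ is a channel, the previous point shows that $\G$ is an instrument.
\item Taking traces gives $\A^\G(y,x)=V_y^*(\B^y(x)\otimes\id)V_y=\overline{\I}^*_y(\B^y(x))$, and summing over $y$ returns $\A(x)$ by assumption.
\end{itemize}

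\emph{Necessity.} Let $\G$ be a joint instrument with $\sum_x\G_{(y,x)}=\I_y$, and fix an arbitrary dilation as above. For each $y$, the operations $\G_{(y,x)}$ are completely positive and are dominated in the CP order by $\I_y$, which has the Stinespring form $\varrho\mapsto \ptr{\hi_A}{V_y\varrho V_y^*}$ with "environment" $\hi_A$. I would invoke the Radon–Nikodym theorem for CP maps (Arveson, Belavkin–Staszewski) in the following form. If $\Phi(\varrho)=\ptr{E}{V\varrho V^*}$ and $\Psi\le\Phi$ in the CP order, then there exists $0\le T\le\id_E$ such that $\Psi(\varrho)=\ptr{E}{(T\otimes\id)V\varrho V^*}$. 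Moreover, decompositions $\Phi=\sum_x\Psi_x$ correspond to families $T_x\ge0$ with $\sum_x T_x$ equal to the identity on the relevant support subspace $E_0\subseteq E$. In the Kraus picture, $E_0$ is the span of the vectors $(\id\otimes\langle\phi|)V\psi$; applied here, this gives operators $T_{y,x}\ge0$ on $\hi_A$ with $\sum_x T_{y,x}=P_y$, the projection onto $E_0$. I would then set $\B^y(x)\coloneq T_{y,x}$ for $x\neq1$ and $\B^y(1)\coloneq T_{y,1}+(\id-P_y)$. This is a meter on $\hi_A$, and since $\id-P_y$ annihilates the range relevant to $V_y$, we still have $\G_{(y,x)}(\varrho)=\ptr{\hi_A}{(\B^y(x)\otimes\id)V_y\varrho V_y^*}$. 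Taking the trace and summing over $y$ then yields $\A(x)=\sum_y\A^\G(y,x)=\sum_y V_y^*(\B^y(x)\otimes\id)V_y=\sum_y\overline{\I}^*_y(\B^y(x))$.

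The main obstacle is this Radon–Nikodym step for an arbitrary, possibly non-minimal, dilation. I would handle it by reducing to the minimal Stinespring dilation of each $\I_y$: its environment embeds isometrically into $\hi_A$, and any two dilations are related by a partial isometry on the environment. I would transport the Radon–Nikodym derivatives along this isometry and pad them on the orthogonal complement as described above. Concretely, I would do this in Kraus form: write $\G_{(y,x)}$ with Kraus operators that are linear combinations of the Kraus operators $(\langle e_i|\otimes\id)V_y$ of $\I_y$, which is possible since $\G_{(y,x)}\le\I_y$. The coefficient matrices then define the $T_{y,x}$ directly and make the positivity and normalization transparent.
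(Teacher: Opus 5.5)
Your proposal is correct, but it argues differently from the paper. The paper does not work from scratch. It quotes the instrument--instrument criterion of \cite{Leppjrvi2024}: $\I$ and $\J$ are compatible iff $\J_x=\sum_y\R^y_x\circ\overline{\I}_y$ for some instruments $\R^y$ out of $\hi_A$. It then regards $\A$ as an instrument with output space $\mathbb{C}$, notes that instruments into $\mathbb{C}$ are just meters $\B^y\in\meters_m(\hi_A)$, and passes to the Heisenberg picture.

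You instead prove this special case directly. Sufficiency comes from an explicit joint instrument: apply $W$, L\"uders-measure $\Eo$ and then $\B^y$ on the ancilla, and discard the ancilla. Necessity comes from the Radon--Nikodym theorem for CP maps applied to $\G_{(y,x)}\le\I_y$. You handle a non-minimal dilation through the support projection $P_y$ and absorb $\id-P_y$ into one effect, which is legitimate because $((\id-P_y)\otimes\id)V_y=0$.

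The paper's route takes two lines, but it rests on a black box and on the easy but unstated identification of its meter--instrument compatibility with instrument--instrument compatibility for trivial output. Yours is self-contained, exhibits the joint device explicitly, and shows that the hypothesis is needed for only one dilation.

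The one step to write out carefully is normalization in your Kraus version. If the operators $(\langle e_i|\otimes\id)V_y$ are linearly dependent, first remove from the coefficient vectors their component in $E_0^\perp$, i.e.\ along linear relations among these operators. This leaves the Kraus operators of $\G_{(y,x)}$ unchanged. Then $\sum_xT_{y,x}=P_y$ follows from injectivity of $T\mapsto\ptr{\hi_A}{(T\otimes\id)V_y(\cdot)V_y^*}$ on operators supported in $E_0$.

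Your reading $\B^y\in\meters_m(\hi_A)$, rather than the $\meters_n$ in the statement, agrees with what the paper's proof actually uses.
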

\begin{proof}
We know from \cite{Leppjrvi2024} that two instruments $\I \in \ins_n(\hi)$ and $\J \in \ins_m(\hi)$ are compatible if and only if for any conjugate instrument $\overline{\I}$ related to any dilation $(\hi_A, W,\Eo)$ of $\I$ there exists a collection of instruments $\left\{\R^y\right\}_{y \in [n]} \subset \ins_m(\hi_A, \hi)$ such that $\J_x = \sum_{y \in [n]} \R^y_x \circ \overline{\I}_y$ for all $x \in [m]$.

Let us consider the meter $\A$ as an instrument, that is, $\mathcal{A} \in  \ins_m(\hi, \mathbb{C})$, where
\begin{align}
    \mathcal{A} _x(\varrho) = \tr{\A(x) \varrho}
\end{align}
for all $x \in [m]$ and $\varrho \in \sh$.
Then, compatibility of $\I$ and $\mathcal{A}$ is equivalent to the existence of a set of instruments $\left\{\mathcal{B}^y\right\}_{y \in [n]} \subset \ins_m(\hi_A, \mathbb{C})$, with $\mathcal{B}^y_x =  \tr{\B^y(x) \varrho}$ and $\B^y \in \meters_m(\hi_A)$, such that
\begin{align}
    \mathcal{A}_x(\varrho) = \sum_{y \in [n]} \mathcal{B}^y_x \circ \overline{\I}_y(\varrho)
\end{align}
for all $x \in [m]$ and $\varrho \in \sh$. In the Heisenberg picture, this is equivalent to:
\begin{align}
\begin{split}
    \tr{\A(x) \varrho} &= \mathcal{A}_x(\varrho) = \sum_{y \in [n]} \mathcal{B}^y_x \circ \overline{\I}_y(\varrho) = \sum_{y \in [n]}\tr{\B^y(x)\overline{\I}_y(\varrho)}\\
    &= \sum_{y \in [n]}  \tr{\overline{\I}^*_y (\B^y(x))  \varrho} = \tr{\sum_{y \in [n]} \overline{\I}^*_y (\B^y(x))  \varrho}
\end{split}
\end{align}
for all $x \in [m]$ and $\varrho \in \sh$. As this has to be satisfied for all $\varrho \in \mathcal{S}(\hi)$, we recover Eq. \eqref{eq:met-inst-comp}.
\end{proof}

The construction of meter $\A$ as given by Proposition \ref{pr:1} is depicted in Fig.~\ref{fig:dilation_comp}.

\begin{figure}[h]
    \centering
    \includegraphics[width=0.7\linewidth]{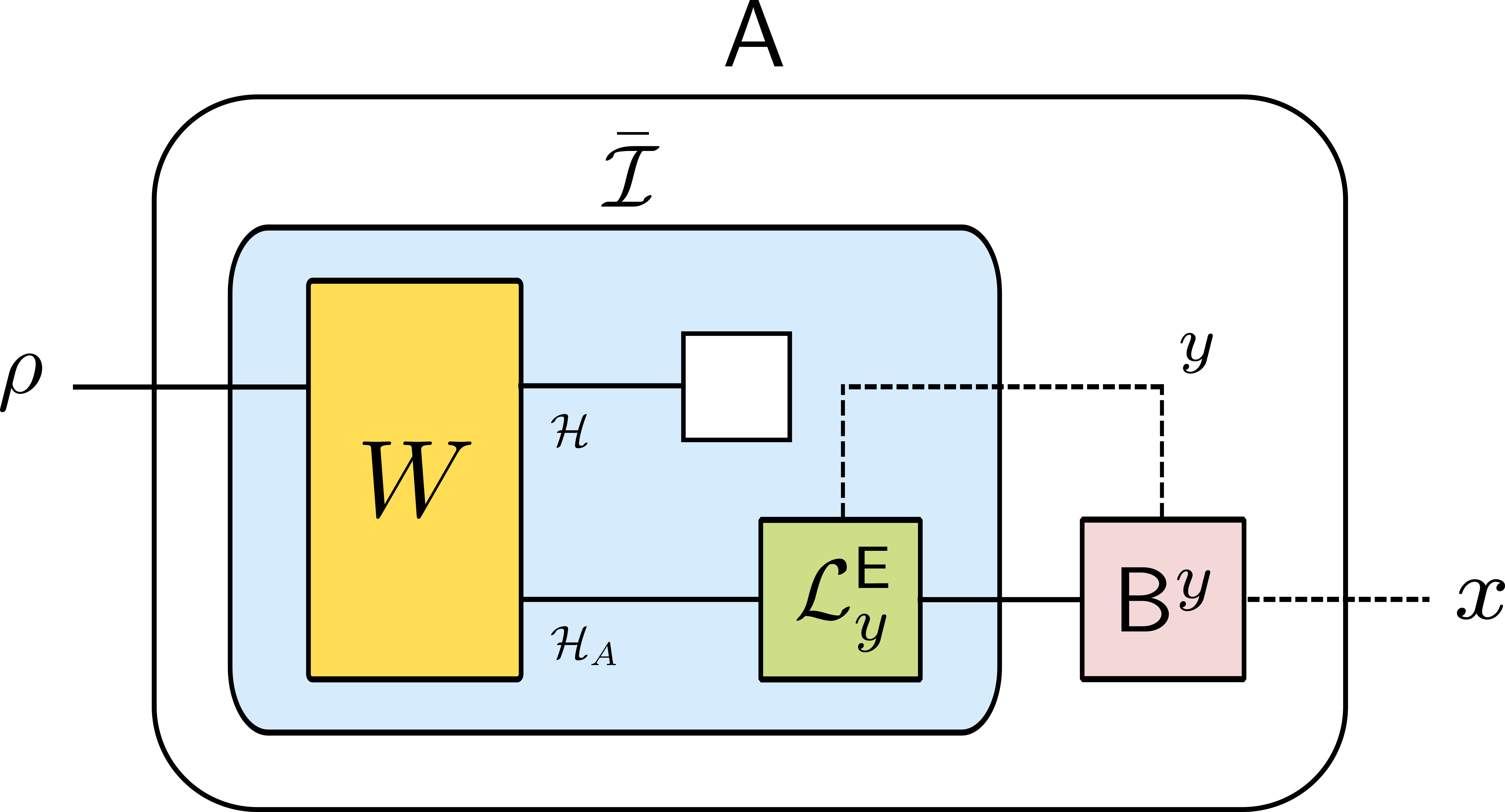}
    \caption{The compatibility of meter $\A$ and instrument $\I$ with dilation $(\hi_A, W,\Eo)$ is equivalent to the fact that $\A$ can be obtained through the conjugate instrument $\overline{\I}$ followed by meters $\B^y$ conditioned to outcome $y$.}
    \label{fig:dilation_comp}
\end{figure}

%%%%%%%%%%%%%%%%%%%%%%%%%%%%%%%%%%%%%%%%%%%%%%%%%%%%
\section{Proofs of Theorem \ref{thm:aligned} and \ref{thm:comp}}\label{sec:proofs}
%%%%%%%%%%%%%%%%%%%%%%%%%%%%%%%%%%%%%%%%%%%%%%%%%%%%

The proofs of both theorems follow similar steps to those in \cite{Heinosaari2018} in which the compatibility of an unbiased qubit meter and a Pauli channel were characterized. Let $\I^{\lambda, t, \hat{m}}$ be the instrument as defined in \eqref{eq:mixed_inst}. We fix a unit vector $\hat{n}$, and we look for the largest $s$ such that the adversary meter $\A^{s, \hat{n}}$ is compatible with $\I^{\lambda, t, \hat{m}}$. From Proposition \ref{pr:1}, we know that the instrument $\I^{\lambda, t, \hat{m}}$ and the meter $\A^{s, \hat{n}}$ are compatible if and only if there exists a collection of meters $\left\{\B^y \right\}_{y \in \left\{+,-\right\}}$ such that
\begin{align}
\A^{s, \hat{n}}(\pm) = \sum_{y \in \qty{+,-}} \qty({\overline{\I}}^{\lambda, t, \hat{m}}_y)^*(\B^y(\pm)).
\end{align}

The task mentioned previously can be stated as a semidefinite program (SDP) as follows. First, notice that
\begin{align}
\tr{\A^{s, \hat{n}}(+) \sigma_{\hat{n}}} = \tr{\frac{1}{2}(\id + s \sigma_{\hat{n}}) \sigma_{\hat{n}}} = \frac{1}{2} \tr{\sigma_{\hat{n}} + s \id} = s.
\end{align}
Then, using the previous condition and going back to the Schrödinger picture, we have
\begin{align}
\begin{split}
s &= \tr{\A^{s, \hat{n}}(+) \sigma_{\hat{n}}} = \tr{\sum_{y \in \qty{+,-}} \qty(\overline{\I}^{\lambda, t, \hat{m}}_y)^*(\B^y(+)) \sigma_{\hat{n}}}\\ 
&= \sum_{y \in \qty{+,-}} \tr{\B^y(+) \overline{\I}^{\lambda, t, \hat{m}}_y (\sigma_{\hat{n}})} = \sum_{y \in \qty{+,-}} \tr{\B^y(+) (\hat{n} \cdot \vec{\Pi}^y)},
\end{split}
\end{align}
where $\Pi^y_i \coloneq \overline{\I}^{\lambda, t, \hat{m}}_y(\sigma_i)$. Then, obtaining the largest possible value $s_{\mathrm{P}}$ such that $\A^{s, \hat{n}}$ and $\mathcal{I}^{\lambda, t, \hat{m}}$ are compatible is equivalent to maximizing the RHS of the equation over all valid effects $B^y \equiv \B^y(+)$:
\begin{align}
\begin{aligned}
s_{\mathrm{P}} \coloneq &\max_{B^+,B^-} \tr{B^+ (\hat{n} \cdot \vec{\Pi}^+) + B^- (\hat{n} \cdot \vec{\Pi}^-)},\\
& \ \text{subject to} \quad  0 \leq B^\pm \leq \id.
\end{aligned}
\end{align}

Let $\hat{n}_1,\hat{n}_2 \in \real^3$ be unit vectors such that $\hat{n}_1\perp \hat{n}_2\perp \hat{n}$. Then, we notice that
\begin{align}
\tr{\A^{s, \hat{n}}(+) \sigma_{\hat{n}_i}} = \tr{\frac{1}{2}(\id + s \sigma_{\hat{n}}) \sigma_{\hat{n}_i}} = 0, \quad \forall i \in \{1,2\}.
\end{align}
This adds two constraints to our SDP problem, mainly:
\begin{align}\label{eq:const_n}
    \tr{\A^{s, \hat{n}}(+) \sigma_{\hat{n}_i}} = \sum_{y \in \qty{+,-}} \tr{B^y (\hat{n}_i \cdot \vec{\Pi}^y)} = 0, \quad \forall i \in \{1,2\}.
\end{align}

The final primal SDP corresponds then to
\begin{align}\label{eq:primal}
\begin{split}
s_{\mathrm{P}}\coloneq\max_{B^{+},\,B^{-}} \;
    \tr{B^{+}(\hat{n}\!\cdot\!\vec{\Pi}^{+}) + B^{-}(\hat{n}\!\cdot\!\vec{\Pi}^{-})},\\
\begin{aligned}
\text{subject to} \quad & 0 \leq B^{\pm} \leq \id,\\
&\tr{B^{+}(\hat{n}_{1}\!\cdot\!\vec{\Pi}^{+})
      + B^{-}(\hat{n}_{1}\!\cdot\!\vec{\Pi}^{-})} = 0,\\
&\tr{B^{+}(\hat{n}_{2}\!\cdot\!\vec{\Pi}^{+})
   + B^{-}(\hat{n}_{2}\!\cdot\!\vec{\Pi}^{-})} = 0,\\
&\hat{n}_{1} \perp \hat{n}_{2} \perp \hat{n}.
\end{aligned}    
\end{split}
\end{align}

The Lagrangian function for the SDP is given by:
\begin{align}
\begin{split}
&\mathcal{L}(B^+, B^-, \mu_1, \mu_2, \mu_3, \mu_4, \nu_1, \nu_2)\\
& = 
  \tr{B^+ (\hat{n}\!\cdot\!\vec{\Pi}^+)} + \tr{B^- (\hat{n}\!\cdot\!\vec{\Pi}^-)} + \tr{\mu_1 \bigl(\id - B^+\bigr)} +\\
  &\quad \ \tr{\mu_2 B^+} + \tr{\mu_3 \bigl(\id - B^-\bigr)} + \tr{\mu_4 B^-} +\\
  &\quad \ \sum_{i \in \{1,2\}}
   \tr{\nu_i \left(
        -\tr{B^+ (\hat{n}_i\!\cdot\!\vec{\Pi}^+)
             + B^- (\hat{n}_i\!\cdot\!\vec{\Pi}^-)}
      \right)}                                                                    \\
&=\tr{B^+ \left[
      \left(\hat{n} - \sum_{i \in \{1,2\}} \nu_i \hat{n}_i \right)
        \!\cdot\! \vec{\Pi}^+
      - \mu_1 + \mu_2
    \right]} +\\
&\quad \  \tr{B^- \left[
      \left(\hat{n} - \sum_{i \in \{1,2\}} \nu_i \hat{n}_i \right)
        \!\cdot\! \vec{\Pi}^-
      - \mu_3 + \mu_4
    \right]} +\\
&\quad \ \tr{\mu_1 + \mu_3},
\end{split}
\end{align}
where $\mu_i$ and $\nu_i$ correspond to the dual variables of the SDP. The corresponding constraints of the dual problem become the following:
\begin{align}
\begin{split}
    \mu_i &\geq 0, \quad \forall i\in\left\{1,2,3,4\right\}\\
    \left(\hat{n}-\sum_{i \in \qty{1,2}} \nu_i \hat{n}_i \right) \cdot \vec{\Pi}^+ - \mu_1 + \mu_2 &= 0,\\
    \left(\hat{n}-\sum_{i \in \qty{1,2}} \nu_i \hat{n}_i \right) \cdot \vec{\Pi}^- - \mu_3 + \mu_4 &= 0.
\end{split}
\end{align}
Defining $\hat{l} = \hat{n}-\sum_{i \in \qty{1,2}} \nu_i \hat{n}_i$, these constraints are simplified to
\begin{align}
\begin{split}
    \mu_i &\geq 0, \quad \forall i\in\left\{1,2,3,4\right\}\\
    -\hat{l} \cdot\vec{\Pi}^+ + \mu_1 &= \mu_2,\\
    -\hat{l} \cdot\vec{\Pi}^- + \mu_3 &= \mu_4.
\end{split}
\end{align}
Then, noticing that $\hat{l} \cdot \hat{n} = 1$, rewriting $\mu_1 = \mu_+$ and $\mu_3 = \mu_-$, and the fact that $\mu_2, \mu_4$ are positive semi-definite, the final dual SDP problem can be stated as:
\begin{align}\label{eq:dual}
\begin{split}
s_{\mathrm{D}}\coloneq\min \tr{\mu_+ + \mu_-},\\
\begin{aligned}
\text{subject to} \quad &\mu_\pm \geq 0,\\
&\mu_\pm \geq \hat{l}\cdot\vec{\Pi}^\pm,\\
&\hat{l}\cdot\hat{n}=1.
\end{aligned}
\end{split}
\end{align}
We aim to find feasible points for the primal SDP and the dual SDP, and show that both yield the same value, that is, the largest possible value of $s$.

\begin{proof}[Proof Theorem \ref{thm:aligned}]

Due to the global symmetries of single qubit systems, we can choose without losing generality the observer's device to be a measurement in the $z$-direction.
This is simply a question of fixing a coordinate system.
Hence, the observer is operating with an instrument
\begin{align}\label{eq:z-inst}
\begin{split}
    \mathcal{Z}_{\pm}^{\lambda, t}(\varrho) \coloneq \I^{\lambda,t,\hat{z}}_{\pm}(\rho) = \ &\lambda \sqrt{\Z^{t}(\pm)} \varrho \sqrt{\Z^{t}(\pm)} +\\
    &(1-\lambda) \tr{\Z^{t}(\pm) \varrho} \frac{\id}{2}
\end{split}
\end{align}
with the related meter
\begin{equation}
\Z^{t}(\pm) \coloneq \A^{t,\hat{z}}(\pm)=\frac{1}{2} \left( \id \pm t \sigma_{z} \right).
\end{equation}

We start by computing the conjugate instruments related to the Pauli matrices $\Pi_i ^\pm = \overline{\mathcal{Z}}_{\pm}^{\lambda, t}(\sigma_i)$, which are given by
\begin{align}
\begin{split}
\Pi_z ^{\pm } = \frac{1}{4} \begin{pmatrix}
                                        C_\pm & 0_2\\
                                        0_2 & D_\pm
                                        \end{pmatrix}, \quad
\Pi_x ^{\pm } = \frac{\sqrt{1-\lambda}}{8} \begin{pmatrix}
                                        0_2 & E_\pm\\
                                        E_\pm^* & 0_2
                                        \end{pmatrix},\quad
\Pi_y^{\pm } = \frac{\sqrt{1-\lambda}}{8} \begin{pmatrix}
                                        0_2 & F_\pm\\
                                        F_\pm^* & 0_2
                                        \end{pmatrix},
\end{split}
\end{align}    
where
\begin{align}
\begin{split}
    C_{\pm } &= \begin{pmatrix}
        -(1 \mp t)(1-\lambda) & 0 \\
        0 & (1 \pm t)(1-\lambda) \\
    \end{pmatrix},\\
    D_{\pm } &= \begin{pmatrix}
       \pm t (1+\lambda) & -\sqrt{\Lambda+4 t^2 \lambda^2} \\
       -\sqrt{\Lambda+4 t^2 \lambda^2} & \pm t (1+\lambda) 
    \end{pmatrix},\\
    E_{\pm } &= \begin{pmatrix}
    -\sqrt{1\mp t} \ (T_{mp} \pm T_{pm}) & \sqrt{1\mp t} \ (T_{pp} \pm T_{mm}) \\
    \sqrt{1\pm t} \ (T_{mp} \mp T_{pm}) & \sqrt{1\pm t} \ (T_{pp} \mp T_{mm})
    \end{pmatrix},\\
    F_{\pm } &= \begin{pmatrix}
    -i\sqrt{1\mp t} \ (T_{mp} \pm T_{pm}) & i\sqrt{1\mp t} \ (T_{pp} \pm T_{mm}) \\
    -i\sqrt{1\pm t} \ (T_{mp} \mp T_{pm}) & -i\sqrt{1\pm t} \ (T_{pp} \mp T_{mm})
    \end{pmatrix},\\
    T_{pp} &= \sqrt{\left(1+\lambda+2 \lambda \sqrt{1-t^2}\right)\left(1+\sqrt{\frac{(1-t^2)\Lambda}{\Lambda+4 t^2 \lambda^2}}\right)},\\
    T_{pm} &= \sqrt{\left(1+\lambda+2 \lambda \sqrt{1-t^2}\right)\left(1-\sqrt{\frac{(1-t^2)\Lambda}{\Lambda+4 t^2 \lambda^2}}\right)},\\
    T_{mp} &= \sqrt{\left(1+\lambda-2 \lambda \sqrt{1-t^2}\right)\left(1+\sqrt{\frac{(1-t^2)\Lambda}{\Lambda+4 t^2 \lambda^2}}\right)},\\
    T_{mm} &= \sqrt{\left(1+\lambda-2 \lambda \sqrt{1-t^2}\right)\left(1-\sqrt{\frac{(1-t^2)\Lambda}{\Lambda+4 t^2 \lambda^2}}\right)},\\
    \Lambda &= (1-\lambda)(1+3\lambda).
\end{split}
\end{align}

Let us consider the aligned case with $\hat{n}=\hat{m}=\hat{z}$, that is, the adversary's meter corresponds to $\Z^{s}$. A feasible point for the primal problem is given by:
\begin{align}
\begin{split}
B &\coloneq B^+ = B^- = \frac{1}{2}\left[\id_4  +\frac{1}{2} \sigma_{z} \otimes \left(\sigma_{x} - \sigma_{z}\right) -\frac{1}{2} \id_2 \otimes \left(\sigma_{x} + \sigma_{z}\right)
\right].    
\end{split}
\end{align}
Indeed, the effect $B$ is a two-dimensional projection, therefore it satisfies the first two constraints of the primal SDP \eqref{eq:primal}. We obtain then a lower bound for the primal SDP, while also satisfying the last two constraints,
\begin{align}
\begin{split}
s_{\max} \coloneq \tr{B {\Pi}_z^+ + B {\Pi}_z^-}& = \frac{1}{2}\left[1- \lambda + \sqrt{\Lambda + 4t^2 \lambda^2} \right],\\
\tr{B {\Pi}_x^+ + B {\Pi}_x^-}& = 0,\\
\tr{B {\Pi}_y^+ + B {\Pi}_y^-}&= 0,
\end{split}
\end{align}
where we choose $\hat{n}_1 = \hat{x}$ and $\hat{n}_2 = \hat{y}$. We then have $s_{\max} \leq s_{\mathrm{P}}$. 

Similarly, a feasible point for the dual SDP \eqref{eq:dual} is given by 
\begin{equation}
    \mu^+ = B {\Pi}_z^+ B, \quad \mu^- = B {\Pi}_z^- B,
\end{equation}
where we choose $\hat{l} = \hat{z}$. The operators $\mu^\pm$ have two eigenvalues 0 and two eigenvalues given by 
\begin{align}
\frac{1}{4}(1\pm t)(1-\lambda) \quad  \text{and} \quad \frac{1}{4}\left[\sqrt{\Lambda+4 t^2 \lambda^2} \pm t (1+\lambda)\right],    
\end{align}
which are positive for all $\lambda, t \in [0,1]$. The operators $\mu^\pm-{\Pi}_z^\pm$ share the same eigenvalues as $\mu^\mp$, that is, with opposite signs. As they are all positive operators, all constraints of the dual problem are satisfied. The dual function then becomes

\begin{align}
\begin{split}
\tr{\mu^+ + \mu^-} &= \tr{B {\Pi}_z^+ B} + \tr{B {\Pi}_z^-B} = \tr{B {\Pi}_z^+} + \tr{B {\Pi}_z^-} = s_{\max},    
\end{split}
\end{align}
where we used the fact that $B$ is a projection. We obtain that $s_{\max} \geq s_{\mathrm{D}}$. 

As both feasible solutions lead to the same values, the choice is optimal, and we obtain that

\begin{align}
    s_{\max} = \frac{1}{2}\left[1- \lambda + \sqrt{(1-\lambda)(1+3\lambda) + 4t^2 \lambda^2} \right]
\end{align}
is both a sufficient and necessary condition. Then, for any value of $s \leq s_{\max}$, the compatibility of $\mathcal{Z}^{\lambda, t}$ and $\Z^{s}$ holds.
\end{proof}

\begin{proof}[Proof Theorem \ref{thm:comp}]

Similarly, due to the global symmetries of single qubit systems, we can consider without loss of generality the case where $\hat{n}=\hat{z}$ and $\hat{m}=\hat{x}$. Then, the observer is operating the instrument $\mathcal{Z}^{\lambda,t}$ defined in \eqref{eq:z-inst}, whereas the adversary's meter corresponds to
\begin{align}
\X^{s}(\pm) \coloneq \A^{s,\hat{x}}(\pm)=\frac{1}{2} \left( \id \pm s \sigma_{x} \right). 
\end{align}

A feasible point for the primal problem is given by:
\begin{align}
\begin{split}
B^+ &= \frac{1}{2}\left[\id_4 + c \sigma_{x} \otimes \sigma_{x} - b \sigma_{x} \otimes \sigma_{z}\right], \\
B^- &= \frac{1}{2}\left[\id_4 + b \sigma_{x} \otimes \sigma_{x} - c \sigma_{x} \otimes \sigma_{z}\right],    
\end{split}
\end{align}
where $b$ and $c$ are given by
\begin{align}
    b= \sqrt{\frac{1}{2} + \frac{t \lambda}{\sqrt{\Lambda+4 \lambda ^2 t^2}}}, \quad c=\sqrt{\frac{1}{2} - \frac{t \lambda}{\sqrt{\Lambda+4 \lambda ^2 t^2}}}.
\end{align}

The eigenvalues of $B^{\pm}$ are
\begin{align}
    \eta_1 = \frac{1}{2}(1 + \sqrt{b^2 + c^2}) = 1, \quad \eta_0 = \frac{1}{2}(1 - \sqrt{b^2 + c^2}) = 0,
\end{align}
each with multiplicity two, that is, they correspond to two-dimensional projections, satisfying the first two constraints of the primal SDP \eqref{eq:primal}. It can also be shown that the last two constraints are also satisfied,
\begin{align}
\begin{split}
s_{\max} \coloneq \tr{B^+ {\Pi}_x^+ + B^- {\Pi}_x^-}& = \frac{1}{2} \left[ 1- \lambda + \sqrt{\Lambda}\right]\sqrt{1-t^2},\\
\tr{B^+ {\Pi}_z^+ + B^- {\Pi}_z^-}& = 0,\\
\tr{B^+ {\Pi}_y^+ + B^- {\Pi}_y^-}&= 0,
\end{split}
\end{align}
where we choose $\hat{n}_1 = \hat{z}$ and $\hat{n}_2 = \hat{y}$. We then have $s_{\max} \leq s_{\mathrm{P}}$.

Similarly, a feasible point for the dual SDP \eqref{eq:dual} is given by 
\begin{align}
    \mu^+ = B^+ {\Pi}_x^+ B^+, \quad \mu^- = B^- {\Pi}_x^- B^-,
\end{align}
where we choose $\hat{l} = \hat{x}$. We need to check whether the dual constraints are satisfied, i.e. all eigenvalues of $\mu^\pm$ and $\mu^\pm-{\Pi}_x^\pm$ are positive. Upon expressing the operators ${\Pi}_x^\pm$ in the eigenbasis of the corresponding operator $B^\pm$, we notice that they have a block-diagonal form, that is:
\begin{align}
\begin{split}
{\Pi}_x^\pm = \begin{pmatrix}
M_\pm & 0 \\
0 & -M_\pm^*
\end{pmatrix}, \quad
M_\pm = m\begin{pmatrix}
g_+ & \mp\sqrt{2} t \lambda \sqrt{1+\lambda-\sqrt{\Lambda}}\\
\mp\sqrt{2} t \lambda \sqrt{1+\lambda-\sqrt{\Lambda}} & g_-
\end{pmatrix}, 
\end{split}
\end{align}
where
\begin{align}
\begin{split}
    m=\frac{1}{4}\sqrt{\frac{{(1-t^2)(1-\lambda)}}{\Lambda+4 \lambda ^2 t^2}}, \quad g_\pm = \sqrt{(1+\lambda) \left(\Lambda +2 \lambda ^2 t^2\right)+2 \sqrt{\Lambda } \lambda  \left(\lambda  t^2\pm\sqrt{\Lambda +4 \lambda ^2 t^2}\right)}.    
\end{split}
\end{align}

The dual variables can thus be rewritten as
\begin{align}
    \mu^\pm=\begin{pmatrix}
M_\pm & 0\\
0 & 0
\end{pmatrix}, \quad \mu^\pm-\Pi_x^\pm =\begin{pmatrix}
0 & 0\\
0 & M_\pm^*
\end{pmatrix}.
\end{align}
Then, all dual variables are positive if the matrices $M_\pm$ are positive semi-definite. It can be checked that $g_\pm \geq 0$ and $g_+ g_-\geq 2 t^2 \lambda^2 (1+\lambda - \sqrt{\Lambda})$ for all $t,\lambda \in [0,1]$, then by Silvester's criterion, the matrices $M_\pm$ are positive semi-definite. The dual function now corresponds to
\begin{align}
\begin{split}
\tr{\mu^+ + \mu^-} = \tr{B^+ {\Pi}_x^+ B^+} + \tr{B^- {\Pi}_x^- B^-} = \tr{B^+ {\Pi}_x^+} + \tr{B^- {\Pi}_x^-}= s_{\max},    
\end{split}
\end{align}
where we used the fact that $B^\pm$ are projections. We obtain that $s_{\max} \geq s_{\mathrm{D}}$. 

As both feasible solutions lead to the same values, the choice is optimal, and we obtain that
\begin{align}
    s_{\max} = \frac{1}{2} \left[ 1- \lambda + \sqrt{(1-\lambda)(1+3\lambda)}\right]\sqrt{1-t^2}
\end{align}
is both a sufficient and necessary condition. Clearly, for any value of $s \leq s_{\max}$, the compatibility of $\mathcal{Z}^{\lambda, t}$ and $\X^{s}$ also holds.

\end{proof}

%%%%%%%%%%%%%%%%%%%%%%%%%%%%%%%%%%%%%%%%%%%%%%%%%%%%
\section{Implementation of adversarial devices}
%%%%%%%%%%%%%%%%%%%%%%%%%%%%%%%%%%%%%%%%%%%%%%%%%%%%

Having established the maximum amount of information that the adversary can obtain, we show now how the device is implemented from the adversary's point of view. As the observer's instrument $\I \in \ins_n(\hi)$ and the adversary's meter $\A\in \meters_m(\hi)$ are compatible, it follows from \cite[Cor. 3]{Leppjrvi2024} that there exists a collection of instruments $\left\{\R^{(y)}\right\}_{y \in [m]} \subset \ins_m(\hi)$ such that

\begin{equation}\label{eq:adv_impl}
    \I_x = \sum_y \R^{(y)}_x \circ \L^{\A}_y ,
\end{equation}
for all $x \in [n]$, where $\L^{\A}\in \ins_m(\hi)$ is the Lüders instrument of the meter $\A$, as shown in Fig.~\ref{fig:joint_inst}.

\begin{figure}[h]
    \centering
    \includegraphics[width=0.75\linewidth]{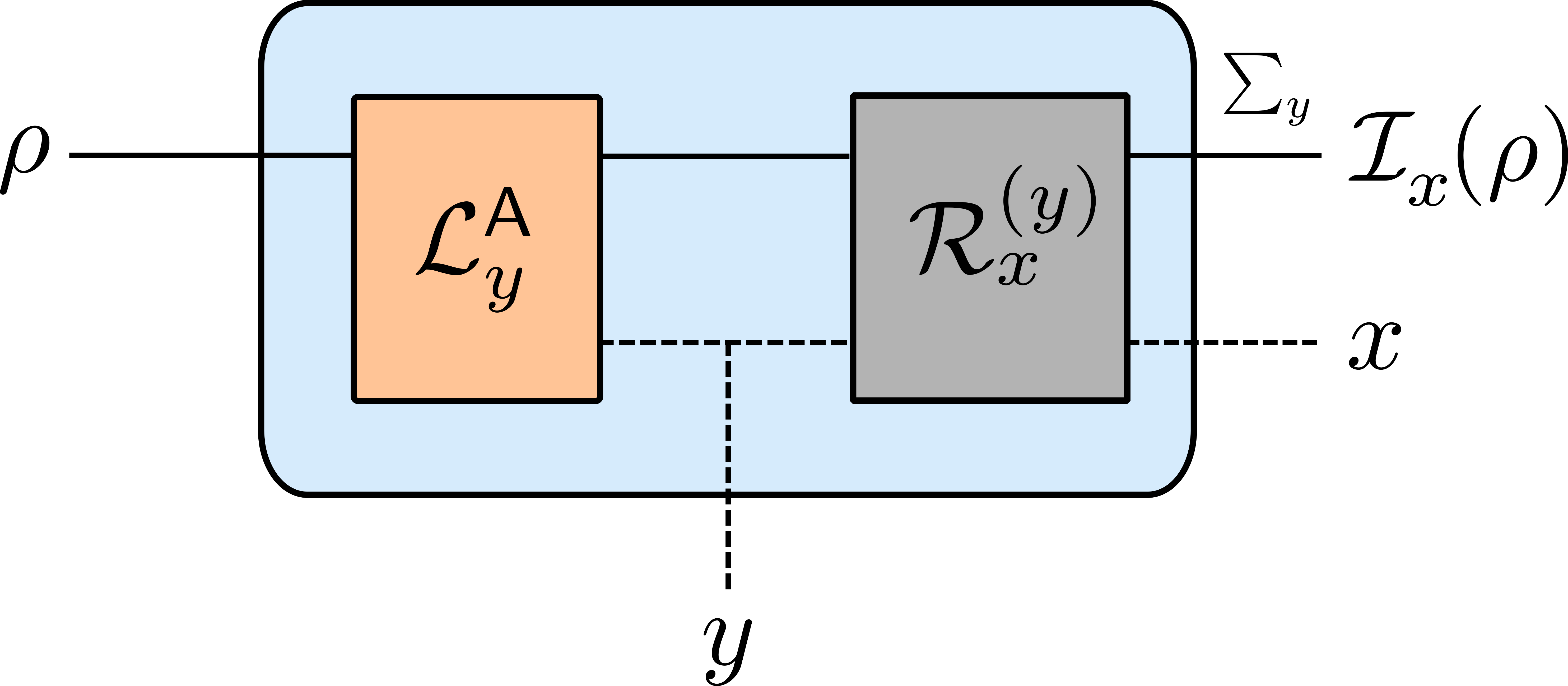}
    \caption{Adversary's implementation of the device: the adversary first retrieves information via the Lüders instrument $\L^{\A}$ of the meter $\A$, and performs a postprocessing instrument $\R^{(y)}$ according to the measurement outcome $y$. As the observer lacks knowledge of the adversary's outcome, summing over all possible outcomes $y$ returns the observer's expected classical and quantum outcomes.}
    \label{fig:joint_inst}
\end{figure}

For the aligned case, again without loss of generality, we consider
\begin{align}
\begin{split}
\mathcal{Z}_{\pm}^{\lambda, t}(\varrho) &= \lambda \sqrt{\Z^{t}(\pm)} \varrho \sqrt{\Z^{t}(\pm)} + (1-\lambda) \tr{\Z^{t}(\pm) \varrho} \frac{\id}{2}, \quad
\Z^{s_{\max}^a}(\pm) = \frac{1}{2} \left( \id \pm s_{\max}^a \sigma_z \right)    
\end{split}
\end{align}
as the observer's instrument and the adversary's meter, respectively, where $s_{\max}^a$ is given by Corollary \ref{cor:max}. From Eq. \eqref{eq:adv_impl}, we have that
\begin{equation}
    \mathcal{Z}_{x}^{\lambda, t} = \sum_y \R^{(y)}_x \circ \L^{\Z^{s_{\max}^a}}_y ,
\end{equation}
where $\L^{\A}$ is the Lüders instrument of the meter $\Z_{\pm}^{s_{\max}^a}$. The quantum operations $\R^{(y)}_x$ can be found by following the proof of \cite[Prop.7]{Leppjrvi2024}  and they are given by:
\begin{align}
\begin{split}
    \R^{(+)}_{\pm} (\varrho) &= \ptr{\hi_A}{(U_+ (\varrho \otimes \kb{0}{0}_{\hi_A}) U_+^*) (\id \otimes \tilde{\Z}^{(+)}(\pm))},\\
    \R^{(-)}_{\pm} (\varrho) &= \ptr{\hi_A}{(U_- (\varrho \otimes \kb{0}{0}_{\hi_A}) U_-^*) (\id \otimes \tilde{\Z}^{(-)}(\pm))},
\end{split}
\end{align}
with $\dim(\hi_A)=2$ and
\begin{align}
\begin{split}
    U_+ &= \mathsf{CX}_{10} (\sigma_x \otimes \id_{\hi_A}) \mathsf{CR}_Y(\theta) (\sigma_x \otimes \id_{\hi_A}),\\
    U_- &= \mathsf{CX}_{10} \mathsf{CR}_Y(\theta),\\
   \tilde{\Z}^{(+)}(\pm) &  = \frac{1}{2} \left[\left(1 \pm \tilde{\mu}\right) \id \pm \tilde{t} \sigma_z \right] = \tilde{\Z}^{(-)}(\mp).
\end{split}
\end{align}
The introduced parameters $\theta, \tilde{\mu}$ and $\tilde{t}$ are given by
\begin{align}
\begin{split}
   \theta = 2 \arcsin{\sqrt{\frac{1+\lambda}{1+s_{\max}^a}}},\quad
   \mu = t \left(\frac{s_{\max}^a + \lambda}{2s_{\max}^a + \lambda - 1}\right),\quad
   \tilde{t} = t \left(\frac{1-s_{\max}^a}{2s_{\max}^a + \lambda - 1}\right).
\end{split}
\end{align}

Here, $\mathsf{CR}_Y(\theta)$ corresponds to a controlled rotation around the $y$-axis on the Bloch sphere and $\mathsf{CX}_{10}$ is the controlled $X$ gate with the ancillary qubit as control and the input state as target:
\begin{align}
    \mathsf{CR}_Y(\theta) =\begin{pmatrix}
    1&0&0&0\\0&1&0&0\\
    0&0&\cos\frac\theta2&-\sin\frac\theta2\\
    0&0&\sin\frac\theta2&\cos\frac\theta2
    \end{pmatrix}, \ \mathsf{CX}_{10}= \begin{pmatrix}1&0&0&0\\0&0&0&1\\0&0&1&0\\0&1&0&0\end{pmatrix}.
\end{align}

On the other hand, the meters $\tilde{\Z}^{(\pm)}$ correspond to biased unsharp single qubit POVMs \cite{Heinosaari2011}, satisfying $|| \tilde{t} || \leq 1\pm \tilde{\mu} \leq 2 - || \tilde{t} ||$ and $|| \tilde{t} || \leq 1$.

The quantum circuit that implements each quantum operation $\R^{(y)}_x$ is depicted in Fig. \ref{fig:postpros1}.

\begin{figure}[h]
    \centering
    \includegraphics[width=0.75\linewidth]{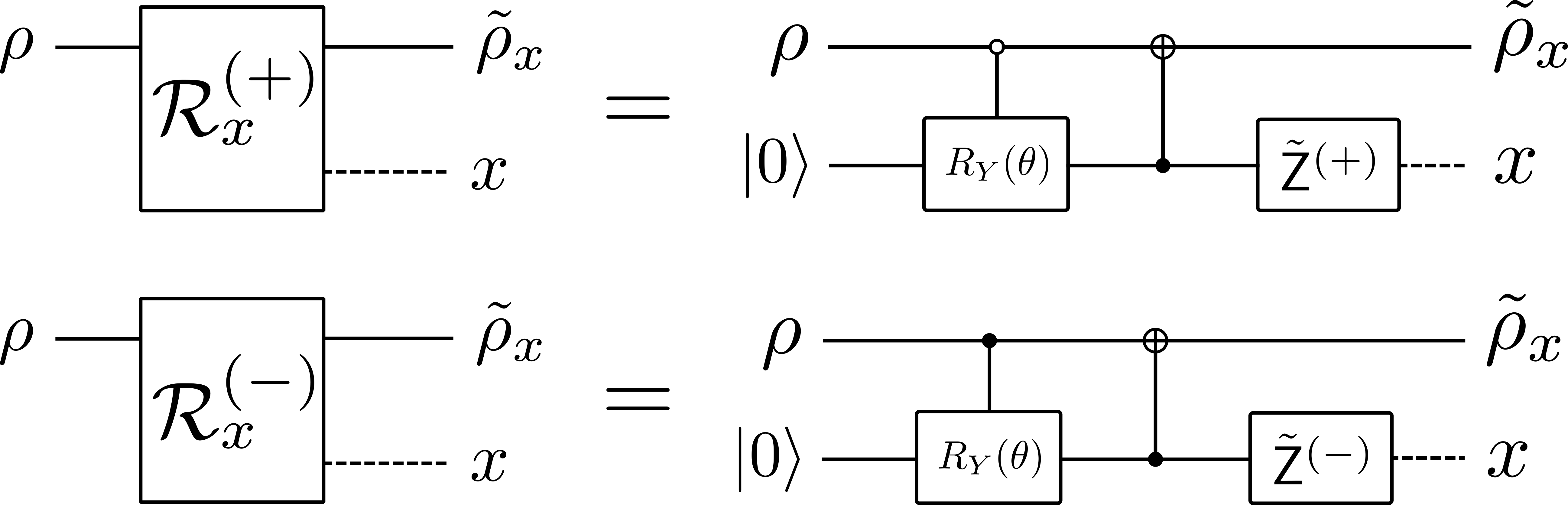}
    \caption{Quantum circuit of each post-processing quantum operation $\R^{(y)}_x$ for the aligned case.}
    \label{fig:postpros1}
\end{figure}

We notice that the induced channels of each postprocessing instrument 
\begin{equation}
    \Phi^{(\pm)} = \sum_x \mathcal{R}^{(\pm)}_x = \ptr{\hi_A}{(U_\pm (\varrho \otimes \kb{0}{0}_{\hi_A}) U_\pm^*)}
\end{equation}
have a clear operational interpretation: they correspond to amplitude damping channels \cite{Nielsen2012,Heinosaari2011}, that is, the channels $\Phi^{(\pm)}$ drive the qubit state into the corresponding ground states $P_\mp = \tfrac{1}{2} (\id \mp \sigma_z).$
%%%%%%%%%%%%%%%%%%%%%%%%%%%%%%%%%%%%%%%%%%%%%%%%%%%%
\section{Summary and conclusions}

We consider the scenario where an observer employs a device described by a non-ideal quantum instrument whose measurement outcomes and corresponding state transformations are described by unsharp meters and noisy Lüders operations, respectively. We assume that the noise on the device may arise from an adversary whose goal is to obtain non-trivial information about the input state while remaining undetectable to the observer. We aim to study how the noise of the observer's device is related to the capabilities of information gain of the adversary meter.

The aforementioned problem can be understood from the point of view of quantum compatibility: the compatibility of the adversary's meter and the observer's instrument implies the existence of a device that describes the outcomes of both parties. Then, for given noise parameters on the observer's instrument, it suffices to find conditions on the adversary's meter such that they can be implemented simultaneously. This meter represents the amount of information that he can obtain without being noticed.

For the single qubit case, we examine the scenarios where the adversary seeks to obtain information on either the same basis as the observer's measurement (aligned case) or in a mutually unbiased basis with respect to the observer's basis (complementary case). In both cases, we obtain the necessary and sufficient conditions for the compatibility of a non-ideal qubit instrument and a noisy qubit meter, which in turn allows us to calculate the maximum amount of information that the adversary can obtain in terms of the noise parameters for the observer's non-ideal instrument. As the noise in the observer's quantum operations increases, we notice that the adversary's information gain increases in both cases. As the noise in the observer's meter decreases, however, we notice that the adversary's information gain increases for the aligned case, but decreases for the complementary case.

We notice that both the meter-instrument characterization proposed in Proposition \ref{pr:1} and the subsequent SDP problem construction are equally valid for systems of dimension $d$. Nonetheless, computing the conjugate instrument becomes harder: the dimension of the ancillary Hilbert space will correspond to the Kraus rank of the induced channel, which may grow as $d$ increases. On the other hand, the SDP problem will now be constructed using $d^2-1$ generalized Gell-Mann matrices instead of Pauli matrices, which means that the number of constraints added by Eq. \eqref{eq:const_n} will scale as $d^2-2$, increasing the difficulty of finding an appropriate feasible point. A similar problem arises when considering measurements with more outcomes: increasing the number of outcomes requires us to find more suitable effects to obtain a feasible point for the SDP problem.

Finally, we provide the device implementation from the adversary's point of view. For the aligned case, the adversary needs to perform a Lüders instrument followed by a postprocessing instrument, whose induced channels correspond to amplitude damping channels. We leave the adversary's implementation for the complementary case for future work, as well as the necessary and sufficient conditions for compatibility for general directions neither aligned nor complementary.

\section*{Acknowledgments}

This work is supported by the Business Finland project BEQAH (Between Quantum Algorithms and Hardware).

%%%%%%%%%%%%%%%%%%%%%%%%%%%%%%%%%%%%%%%%%%%%%%%%%%%%

\bibliographystyle{apsrev4-2}
\bibliography{ms_References}

\end{document}